\newcommand{\rs}{\mathrm{rs}}
\newcommand{\GL}{\mathrm{GL}}
\newcommand{\PG}{\mathrm{PG}(\F_q^n)}
\newcommand{\Mat}{\mathrm{Mat}}
\newcommand{\F}{\mathbb{F}}
\newcommand{\G}{\mathcal{G}_q(k,n)}
\newcommand{\Vvs}{\mathcal{V}}
\newcommand{\Uvs}{\mathcal{U}}
\newcommand{\Cvs}{\mathcal{C}}
\begin{document}

\title{A Complete Characterization of Irreducible Cyclic Orbit Codes
  and their Pl\"ucker Embedding \thanks{Research partially supported
    by Swiss National Science Foundation Project no. 126948. A
    preliminary version of this paper was presented at the Seventh
    International Workshop on Coding and Cryptography (WCC) 2011.}} %
\author{Joachim Rosenthal \and Anna-Lena Trautmann}
\institute{Institute of Mathematics\\
  University of Zurich, Switzerland\\
  \email{www.math.uzh.ch/aa}}

\maketitle


\begin{abstract}
  Constant dimension codes are subsets of the finite Grassmann
  variety. The study of these codes is a central topic in random linear network coding theory.

  Orbit codes represent a subclass of constant dimension codes. They
  are defined as orbits of a subgroup of the general linear group on
  the Grassmannian.

  This paper gives a complete characterization of orbit codes that are
  generated by an irreducible cyclic group, i.e. a group having one
  generator that has no non-trivial invariant subspace. We show how
  some of the basic properties of these codes, the cardinality and the
  minimum distance, can be derived using the isomorphism of the vector
  space and the extension field. Furthermore, we investigate the
  Pl\"ucker embedding of these codes and show how the orbit structure
  is preserved in the embedding.

  \keywords{network coding \and constant dimension codes \and
    Grassmannian \and Pl\"ucker embedding \and projective space \and
    general linear group} \subclass{11T71}

\end{abstract}

\section{Introduction}

In network coding one is looking at the transmission of information
through a directed graph with possibly several senders and several
receivers \cite{ah00}. One can increase the throughput by linearly
combining the information vectors at intermediate nodes of the
network.  If the underlying topology of the network is unknown we
speak about \textit{random linear network coding}. Since linear spaces
are invariant under linear combinations, they are what is needed as
codewords \cite{ko08}. It is helpful (e.g. for decoding) to constrain
oneself to subspaces of a fixed dimension, in which case we talk about
\emph{constant dimension codes}.

The set of all $k$-dimensional subspaces of a vector space $V$ 
is often referred to as the Grassmann variety (or simply Grassmannian) 
and denoted by 
$\mathcal{G}(k,V)$. {\em Constant dimension codes} are 
subsets of $\mathcal{G}(k,\F_{q}^n)$, where $\F_{q}$ is some finite field.

The general linear group $\GL(V)$ consisting of all invertible
transformations acts naturally on the Grassmannian $\mathcal{G}(k,V)$.
If $\mathfrak{G}\leq \GL(\F_{q}^n)$ is a subgroup then one has an
induced action of $\mathfrak{G}$ on the finite Grassmannian
$\mathcal{G}(k,\F_{q}^n)$. Orbits under the $\mathfrak{G}$-action are
called \textit{orbit codes} \cite{tr10p}. Orbit codes have useful
algebraic structure, e.g. for the computation of the distance of an
orbit code it is enough to compute the distance between the starting
point and any of its orbit elements. This is analogous to linear block
codes where the minimum distance of the code can be derived from the
weights of the non-zero code words.

Orbit codes can be classified according to the groups used to
construct the orbit. In this work we characterize orbit codes
generated by irreducible cyclic subgroups of the general linear group
and their Pl\"ucker embedding.

The paper is structured as follows: The second section gives some
preliminaries, first of random network coding and orbit codes. Then
some facts on irreducible polynomials are stated and the
representation of finite vector spaces as Galois extension fields is
explained in \ref{galois}. In part \ref{img} we introduce irreducible
matrix groups and give some properties, with a focus on the cyclic
ones.  The main body of the paper are Section \ref{icoc} and \ref{PC}.
In the former where we study the behavior of orbit codes generated by
these groups and compute the cardinality and minimum distances of
them. We begin by characterizing primitive orbit codes and then study
the non-primitive irreducible ones.  Section \ref{PC} deals with the
Pl\"ucker embedding of cyclic irreducible orbit codes. Finally we give
a conclusion and an outlook in Section \ref{conclusion}.

\section{Preliminaries}
\subsection{Random Network Codes}\label{oc}

Let $\mathbb{F}_q$ be the finite field with $q$ elements, where $q$ is
a prime power. For simplicity we will denote the Grassmannian
$\mathcal{G}(k,\F_{q}^n)$ by $\G$ and the general linear group, that
is the set of all invertible $n\times n$-matrices with entries in
$\F_{q}$, by $\GL_{n}$. Moreover, the set of all $k\times n$-matrices
over $\F_q$ is denoted by $\Mat_{k\times n}$.

Let $U\in \Mat_{k\times n}$ be a matrix of rank $k$ and
\[
\mathcal{U}=\rs (U):= \text{row space}(U)\in \G.
\]
One can notice that the row space is invariant under $\GL_k$-multiplication from
the left, i.e. for any $T\in \GL_k$
\[
\mathcal{U}=\rs(U)= \rs(T U).
\]
Thus, there are several matrices that represent a given subspace. A
unique representative of these matrices is the one in reduced row
echelon form.
Any $k\times n$-matrix can be transformed into reduced row echelon
form by a $T\in \GL_k$.

The set of all subspaces of $\F_q^n$, called the projective geometry of $\F_q^n$, is denoted by $\PG$. 
The \emph{subspace distance} is a metric on it, given by
\begin{align*}
  d_S(\mathcal{U},\mathcal{V}) =& \dim(\Uvs)+\dim(\Vvs) - 2\dim(\mathcal{U}\cap
  \mathcal{V})
\end{align*}
for any $\mathcal{U},\mathcal{V} \in \PG$. It is a suitable
distance for coding over the operator channel \cite{ko08}. 

A \textit{constant dimension code} $\mathcal{C}$ is simply a subset of the
Grassmannian $\G$. The minimum distance is defined
in the usual way. A code $\mathcal{C}\subset \G$ with minimum
distance $d_S(\mathcal{C})$ is called an $[n, d_S(\mathcal{C}), |\mathcal{C}|, k]$-code. Different constructions of constant dimension codes can be found in e.g. \cite{et08u,ko08p,ko08,ma08p,si08a,tr10p}.

In the case that $k$ divides $n$ one can construct \textit{spread
  codes} \cite{ma08p}, i.e. optimal codes with minimum distance $2k$.
These codes are optimal because they achieve the Singleton-like bound \cite{ko08},
which means they have $\frac{q^{n}-1}{q^{k}-1}$ elements.

Given $U\in \Mat_{k\times n}$ of rank $k$,
$\mathcal{U}\in \G$ its row space and $A\in \GL_n$, we
define
\[
\mathcal{U} A:=\rs(U A).
\]

Let $U,V\in \Mat_{k\times n}$ be matrices such that
$\rs(U)=\rs(V)$. Then one readily verifies that $\rs(U A)=\rs(V A)$
for any $A\in \GL_n$. The subspace distance is $\GL_{n}$-invariant, i.e. $d_{S}(\Uvs, \Vvs) = d_{S}(\Uvs A, \Vvs A) $ for $A\in \GL_n$.

This multiplication with $\GL_n$-matrices defines a
group operation from the right on the Grassmannian:

\[
\begin{array}{ccc}
  \G\times \GL_n& \longrightarrow &
  \G \\ 
  (\mathcal{U},A) & \longmapsto & \mathcal{U} A
\end{array} 
\]
Let $\mathcal{U}\in \G$ be fixed and $\mathfrak{G}$ a
subgroup of $\GL_n$. Then
\[
\mathcal{C}= \{\mathcal{U} A \mid A \in \mathfrak{G}\}
\]
is called an \emph{orbit code} \cite{tr10p}. It is well-known that 
\[
\G \cong \GL_n/Stab_{\GL_n}(\mathcal{U}),
\]
where $Stab_{\GL_n}(\mathcal{U}):=\{A\in \GL_n \mid \mathcal{U} A=\mathcal{U}\}$. 
There are different subgroups that generate the same orbit code.
An orbit code is called \emph{cyclic} if it can be defined by a cyclic
subgroup $\mathfrak{G} \leq \GL_n$.

\subsection{Irreducible Polynomials and Extension
  Fields}\label{galois}

Let us state some known facts on irreducible polynomials over finite
fields \cite[Lemmas 3.4 - 3.6]{li86} :

\begin{lemma}
  Let $p(x)$ be an irreducible polynomial over $\mathbb{F}_{q}$ of
  degree $n$, $p(0)\neq 0$ and $\alpha$ a root of it. Define the order
  of a polynomial $p(x) \in \F_{q}[x]$ with $p(0)\neq 0$ as the
  smallest integer $e$ for which $p(x)$ divides $x^e-1$.  Then
  \begin{enumerate}
  \item the order of $p(x)$ is equal to the order of $\alpha$ in
    $\F_{q^{n}}\backslash \{0\}$.
  \item the order of $p(x)$ divides $q^n-1$.
  \item $p(x)$ divides $x^c-1$ if, and only if the order of $p(x)$ divides $c$
    (where $c\in \mathbb{N}$).
  \end{enumerate}

\end{lemma}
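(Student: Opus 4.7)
The plan is to reduce all three parts to a single key observation: since $p(x) \in \F_q[x]$ is irreducible with $p(0)\neq 0$ and has $\alpha$ as a root, $p(x)$ agrees up to a unit in $\F_q^\times$ with the minimal polynomial of $\alpha$ over $\F_q$. Consequently, for any polynomial $f(x) \in \F_q[x]$ we have $p(x)\mid f(x)$ if and only if $f(\alpha) = 0$. Once this bridge is in place, every divisibility statement about $p(x)$ translates into a multiplicative statement about $\alpha$ in $\F_{q^n}^\times$, and all three claims drop out.

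For part (1), I would apply the above principle to $f(x) = x^e - 1$: we have $p(x)\mid x^e-1$ if and only if $\alpha^e = 1$. Taking $e$ to be the smallest positive integer on each side, the left-hand side is, by definition, the order of $p(x)$, while the right-hand side is the order of $\alpha$ in $\F_{q^n}^\times$. Before doing so, I must note that the condition $p(0) \neq 0$ forces $\alpha \neq 0$, so $\alpha$ has finite order; this simultaneously confirms that the order of $p(x)$ is well-defined.

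For part (2), I would invoke Lagrange's theorem: $\F_{q^n}^\times$ has order $q^n-1$, so the order of $\alpha$ divides $q^n-1$, and part (1) transfers this to $p(x)$. For part (3), the same equivalence $p(x)\mid x^c-1 \iff \alpha^c = 1$ holds for every positive integer $c$; now $\alpha^c = 1$ is equivalent to $\mathrm{ord}(\alpha)\mid c$ by the standard characterization of element orders in any group, and by (1) this is the same as $\mathrm{ord}(p)\mid c$.

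The one step that requires any genuine argument is the converse direction of the bridge, namely that $f(\alpha)=0$ forces $p(x)\mid f(x)$ in $\F_q[x]$. This is where irreducibility is essential: in the PID $\F_q[x]$, $p(x)$ and $f(x)$ cannot be coprime (they share the root $\alpha$ in an extension), so $\gcd(p,f)$ is a non-constant divisor of $p$, hence associate to $p$. Everything else is routine manipulation of orders in finite cyclic groups.
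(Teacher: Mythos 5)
Your proof is correct. Note that the paper itself gives no proof of this lemma at all: it simply quotes the statements from Lidl and Niederreiter (the cited Lemmas 3.4--3.6 of \cite{li86}), so there is no ``paper proof'' to deviate from. Your argument is the standard one underlying that reference: identify $p(x)$, up to a scalar, with the minimal polynomial of $\alpha$, so that $p(x)\mid f(x)$ in $\F_q[x]$ if and only if $f(\alpha)=0$, and then read off (1) by applying this to $f(x)=x^e-1$, (2) from Lagrange's theorem in the group $\F_{q^n}\setminus\{0\}$ of order $q^n-1$ (using that $\deg p=n$ places $\alpha$ in $\F_{q^n}$), and (3) from the usual characterization of element orders. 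The only step needing genuine justification, the converse direction of the bridge, you handle correctly via the gcd argument in the PID $\F_q[x]$ together with irreducibility, and your remark that $p(0)\neq 0$ forces $\alpha\neq 0$ (so that both orders are well defined) closes the remaining gap. Nothing is missing.
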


There is an isomorphism between the vector space $\mathbb{F}_q^n$ and
the Galois extension field $\mathbb{F}_{q^n} \cong
\mathbb{F}_q[\alpha]$, for $\alpha$ a root of an irreducible
polynomial $p(x)$ of degree $n$ over $\mathbb{F}_q$. If in addition
$p(x)$ is primitive, then
\[\mathbb{F}_q[\alpha]\backslash\{0\}=\langle \alpha \rangle =
\{\alpha^i \mid i=0,...,q^n-2\}\] i.e.  $\alpha$ generates
multiplicatively the group of invertible elements of the extension
field.

\begin{lemma}\label{vsprim}
  If $k|n, c:=\frac{q^n-1}{q^k-1}$ and $\alpha$ a primitive element of
  $\mathbb{F}_{q^n}$, then the vector space generated by
  $1,\alpha^{c},..., \alpha^{(k-1)c}$ is equal to $\{\alpha^{ic} \mid
  i=0,...,q^{k}-2\}\cup\{0\}=\F_{q^k}$.
\end{lemma}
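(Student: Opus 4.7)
The plan is to identify both sides of the desired equality with the unique subfield $\F_{q^k}$ of $\F_{q^n}$, which exists precisely because $k\mid n$.

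First I would compute the multiplicative order of $\alpha^c$ in $\F_{q^n}\setminus\{0\}$. Since $\alpha$ is primitive, its order is $q^n-1$, and $c$ divides $q^n-1$ by definition, so the order of $\alpha^c$ is $(q^n-1)/c=q^k-1$. Consequently the cyclic group $\langle\alpha^c\rangle=\{\alpha^{ic}\mid i=0,\dots,q^k-2\}$ has exactly $q^k-1$ elements. Inside $\F_{q^n}^{\times}$ there is a unique subgroup of that order, namely the multiplicative group of the unique subfield of size $q^k$. Hence $\langle\alpha^c\rangle=\F_{q^k}^{\times}$, and adjoining $0$ gives the rightmost equality $\{\alpha^{ic}\mid i=0,\dots,q^k-2\}\cup\{0\}=\F_{q^k}$.

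Second I would show the $\F_q$-span of $1,\alpha^c,\dots,\alpha^{(k-1)c}$ is all of $\F_{q^k}$. All $k$ of these elements lie in $\F_{q^k}$ (by the previous step), and $\F_{q^k}$ has $\F_q$-dimension $k$, so it suffices to verify $\F_q$-linear independence. Since $\alpha^c$ has order $q^k-1$ in $\F_{q^k}^{\times}$, it is a primitive element of $\F_{q^k}$, so $\F_q[\alpha^c]=\F_{q^k}$; equivalently, the minimal polynomial of $\alpha^c$ over $\F_q$ has degree $k$. Therefore no nontrivial $\F_q$-linear relation among $1,\alpha^c,\dots,\alpha^{(k-1)c}$ can hold, and these elements form an $\F_q$-basis of $\F_{q^k}$.

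There is no substantial obstacle in this argument; it is essentially bookkeeping around two facts, both consequences of $k\mid n$: the existence and uniqueness of the subfield of order $q^k$ in $\F_{q^n}$, and the order computation $\mathrm{ord}(\alpha^c)=(q^n-1)/\gcd(q^n-1,c)=q^k-1$. The step one has to be slightly careful about is the passage from ``$\alpha^c$ has order $q^k-1$'' to ``$\alpha^c$ generates $\F_{q^k}$ as an $\F_q$-algebra''; this is where primitivity of $\alpha^c$ in the subfield, rather than just in the ambient field, is used.
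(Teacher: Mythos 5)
Your proof is correct and follows essentially the same route as the paper: you compute the order of $\alpha^c$ to be $q^k-1$ and invoke the uniqueness of the subgroup of that order (equivalently, of the subfield $\F_{q^k}$) in $\F_{q^n}$. In fact you are slightly more thorough than the paper, which stops at identifying $\langle \alpha^c\rangle\cup\{0\}$ with $\F_{q^k}$, whereas you also verify explicitly that $1,\alpha^c,\dots,\alpha^{(k-1)c}$ span it, via $\F_q[\alpha^c]=\F_{q^k}$ and the degree-$k$ minimal polynomial of $\alpha^c$.
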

\begin{proof}
  Since $k|n$ it holds that $c\in \mathbb{N}$.  Moreover, it holds that
  $(\alpha^{c})^{q^{k}-1}= \alpha^{q^{n}-1}= 1$ and
  $(\alpha^c)^{q^{k}-2}= \alpha^{-c}\neq 1$, hence the order of
  $\alpha^{c}$ is $q^{k}-1$.  It is well-known that if $k$ divides $n$
  the field $\mathbb{F}_{q^n}$ has exactly one subfield
  $\mathbb{F}_{q^k}$. Thus the group generated by $\alpha^{c}$ has to
  be $\mathbb{F}_{q^k}\backslash \{0\}$, which again is isomorphic to
  $\mathbb{F}_q^{k}$ as a vector space.   \qed 
  \end{proof}



\subsection{Irreducible Matrix Groups}\label{img}

\begin{definition}
  \begin{enumerate}
  \item A matrix $A\in \GL_n$ is called \textsl{irreducible} if
    $\F_q^n$ contains no non-trivial $A$-invariant subspace, otherwise it is
    called \textsl{reducible}.
  \item A subgroup $\mathfrak{G}\leq \GL_n$ is called
    \textsl{irreducible} if $\F_q^n$ contains no
    $\mathfrak{G}$-invariant subspace, otherwise it is called
    \textsl{reducible}.
   \item  An orbit code $\mathcal{C} \subseteq \G$ is called 
    \textsl{irreducible} if $\mathcal{C}$ is the orbit 
    under the action of an irreducible group.
  \end{enumerate}
\end{definition}

A cyclic group is irreducible if and only if its generator matrix is
irreducible. Moreover, an invertible matrix is irreducible if and only
if its characteristic polynomial is irreducible.

\begin{example} 
  Over $\mathbb{F}_2$ the only irreducible polynomial of degree $2$ is
  $p(x)=x^2+x+1$. Since their characteristic polynomial has to be $p(x)$, the irreducible matrices in $\GL_2$ must have trace
  and determinant equal to $1$ and hence are
  \[\left(\begin{array}{cc} 0 & 1 \\ 1 & 1 \end{array}\right)
  \textnormal{ and } \left(\begin{array}{cc} 1 & 1 \\ 1 & 0
    \end{array}\right).\]
\end{example}

We can say even more about irreducible matrices
with the same characteristic polynomial.
For this, note that the definition of an irreducible matrix $G$
implies the existence of a {\em cyclic vector} $v\in \F_{q}^n$
having the property that
$$
\{ v,vG,vG^2,\ldots,vG^{n-1}\}
$$
forms a basis of $\F_{q}^n$. Let $S\in \GL_n$ be the basis transformation
which transforms the matrix $G$ into this new basis. Then it follows that
$$
SGS^{-1}=
\left( 
  \begin{array}{ccccc}
    0 & 1 & 0 &\ldots  &0\\
    0 & 0 & 1 &\ldots  &0\\
 \vdots&\vdots&\vdots&\ddots&\vdots\\
   0 & 0 & 0 &\ldots  &1\\
   -c_0 & -c_{1} & & \ldots  &-c_{n-1}
  \end{array}
\right).
$$
The matrix appearing on the right is said to be
in {\em companion form}. By convention we will use row
vectors $v\in \F_q^{n}$ and accordingly companion matrices where the coefficients of
the corresponding polynomials are in the last row (instead of the last
column).

One readily verifies that 
$$
p(x):=x^n+c_{n-1}x^{n-1}+\cdots+c_1x+c_0
$$
is the characteristic polynomial of both $G$ and $SGS^{-1}$.  It
follows that every irreducible matrix in $\GL_n$ is similar to the
companion matrix of its characteristic polynomial. Hence all
irreducible matrices with the same characteristic polynomial are
similar.

Furthermore, the order of
$G \in \GL_{n}$ is equal to the order of its characteristic polynomial.  Hence $\textrm{ord}(G)=q^{n}-1$ if and only if its characteristic polynomial is primitive.

The next fact is a well-known group theoretic result:

\begin{lemma}\cite[Theorem 1.15.]{li86}   In a finite cyclic group
  $\mathfrak{G}=\langle G \rangle$ of order $m$, the element $G^l$
  generates a subgroup of order $\frac{m}{\gcd(l,m)}$. Hence each
  element $G^l$ with $\gcd(l,m)=1$ is a generator of $\mathfrak{G}$.
\end{lemma}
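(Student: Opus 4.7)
The plan is to prove this by a standard argument about orders in cyclic groups, in two stages: first compute the order of $G^l$, then deduce the generator criterion as an immediate corollary.

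First I would set $d := \gcd(l,m)$ and write $l = d\ell'$, $m = dm'$ with $\gcd(\ell',m')=1$. The goal becomes showing that the order of $G^l$ is exactly $m' = m/d$. For the upper bound, I would compute
$$
(G^l)^{m'} = G^{lm'} = G^{d\ell' m'} = G^{\ell' m} = (G^m)^{\ell'} = e,
$$
so the order of $G^l$ divides $m'$. For the lower bound, I would suppose $(G^l)^k = e$ for some positive $k$. Since $G$ has order $m$, this forces $m \mid lk$, i.e.\ $dm' \mid d\ell' k$, hence $m' \mid \ell' k$. Invoking $\gcd(\ell',m')=1$ then yields $m' \mid k$, so in particular $k \geq m'$. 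Combining the two bounds gives $\mathrm{ord}(G^l) = m' = m/\gcd(l,m)$, and consequently $|\langle G^l\rangle| = m/\gcd(l,m)$.

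The second assertion is then immediate: if $\gcd(l,m)=1$, the subgroup $\langle G^l\rangle$ has order $m = |\mathfrak{G}|$, so it must coincide with $\mathfrak{G}$, proving $G^l$ is a generator.

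There is no real obstacle here; the only subtle point is the appeal to $\gcd(\ell',m')=1$ to cancel $\ell'$ from the divisibility $m' \mid \ell' k$, which is the standard use of Euclid's lemma. Since this lemma is a cited classical result from \cite{li86}, one could equally well give no proof at all, but the two short bounds above constitute a self-contained argument.
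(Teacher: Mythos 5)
Your argument is correct: the two divisibility bounds establish $\mathrm{ord}(G^l)=m/\gcd(l,m)$, and the generator criterion follows at once. The paper itself gives no proof of this lemma --- it is quoted as a known result from the cited reference --- and your self-contained argument is exactly the standard one found there, so there is nothing further to reconcile.
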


\begin{lemma}\cite[Theorem 7]{ma11} 
   All irreducible cyclic groups generated by matrices with a
   characteristic polynomial of the same order are conjugate to each
   other.
\end{lemma}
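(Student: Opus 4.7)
The plan is to pass to companion matrices and then translate the conjugacy question into the multiplicative group of $\F_{q^n}$. Since every irreducible matrix in $\GL_n$ is similar to the companion matrix of its characteristic polynomial, and a similarity $SGS^{-1}=G'$ conjugates $\langle G\rangle$ onto $\langle G'\rangle$, the first step is to reduce to the case where $G_1=C_1$ and $G_2=C_2$ are the companion matrices of two irreducible polynomials $p_1,p_2\in\F_q[x]$ of degree $n$ sharing the same order $e$.

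Next, fix a root $\alpha_i$ of $p_i$ in $\F_{q^n}$ and use the identification $\F_q^n\cong\F_q[\alpha_1]=\F_{q^n}$ from Section \ref{galois}, under which $C_1$ acts as multiplication by $\alpha_1$. By the lemma in Section \ref{galois} the orders of $\alpha_1$ and $\alpha_2$ in $\F_{q^n}\setminus\{0\}$ both equal $e$. Because $\F_{q^n}^*$ is cyclic, it has a unique subgroup of order $e$; hence $\alpha_2=\alpha_1^l$ for some integer $l$ with $\gcd(l,e)=1$, and in particular $\langle\alpha_1\rangle=\langle\alpha_2\rangle$ inside $\F_{q^n}^*$.

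Translating back, $C_1^l$ represents multiplication by $\alpha_1^l=\alpha_2$. Since $p_2$ is irreducible of degree $n$ with $p_2(\alpha_2)=0$, it is the minimal and therefore the characteristic polynomial of $C_1^l$. Thus $C_1^l$ is an irreducible matrix with characteristic polynomial $p_2$, so by the remark that all irreducible matrices with the same characteristic polynomial are similar, there exists $T\in\GL_n$ with $TC_1^lT^{-1}=C_2$. The preceding cyclic-group lemma, together with $\gcd(l,e)=1$, yields $\langle C_1^l\rangle=\langle C_1\rangle$, whence
\[
T\langle C_1\rangle T^{-1}=T\langle C_1^l\rangle T^{-1}=\langle TC_1^lT^{-1}\rangle=\langle C_2\rangle.
\]
Composing with the initial similarities $G_i\mapsto C_i$ produces a single matrix that conjugates $\langle G_1\rangle$ onto $\langle G_2\rangle$.

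The only substantive step I foresee is the identification of the characteristic polynomial of $C_1^l$ with $p_2$; once the problem is lifted into the cyclic group $\F_{q^n}^*$ via the companion-matrix picture, everything else is routine bookkeeping about orders of elements in cyclic groups.
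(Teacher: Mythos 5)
Your argument is correct. Note that the paper itself gives no proof of this lemma---it is quoted from the reference [ma11, Theorem 7]---so there is no internal proof to compare against; your write-up supplies a self-contained argument built entirely from facts the paper states elsewhere (every irreducible matrix is similar to the companion matrix of its characteristic polynomial; the order of a matrix equals the order of its characteristic polynomial, which equals the order of a root in $\F_{q^n}\setminus\{0\}$; and the cyclic-group lemma about $\langle G^l\rangle$ when $\gcd(l,m)=1$). The one step you rightly single out is sound: under the identification $\phi$ the matrix $C_1^l$ is multiplication by $\alpha_1^l=\alpha_2$, whose minimal polynomial over $\F_q$ is $p_2$; since $\deg p_2=n$, the minimal and characteristic polynomials of $C_1^l$ coincide and equal $p_2$, so the similarity $TC_1^lT^{-1}=C_2$ exists and, combined with $\langle C_1^l\rangle=\langle C_1\rangle$, conjugates $\langle C_1\rangle$ onto $\langle C_2\rangle$ as claimed.
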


\begin{example}
  Over $\mathbb{F}_2$ the irreducible polynomials of degree $4$ are
  $p_1(x)=x^4+x+1$, $p_2(x)=x^4+x^3+1$ and $p_3(x)=x^4+x^3+x^2+x+1$, where
  $\textrm{ord}(p_1)=\textrm{ord}(p_2)=15$ and $\textrm{ord}(p_3)=5$. Let $P_1,P_2,P_3$ be the
  respective companion
  matrices. 
  One verifies that $\langle P_1\rangle$ and $\langle P_2\rangle$ are
  conjugate to each other
  but $\langle P_3 \rangle$ is not conjugate to them.
\end{example}

One can describe the action of an irreducible matrix group via the
Galois extension field isomorphism.
\begin{theorem}\label{compmult}
  Let $p(x)$ be an irreducible polynomial over $\mathbb{F}_{q}$ of
  degree $n$ and $P$ its companion matrix.  Furthermore let $\alpha
  \in \mathbb{F}_{q^n} $ be a root of $p(x)$ and $\phi$ be the
  canonical homomorphism
  \begin{align*}
    \phi: \mathbb{F}_q^n &\longrightarrow \mathbb{F}_{q^n} \cong \F_{q}[\alpha] \\
    (v_1,\dots,v_n) &\longmapsto \sum_{i=1}^n v_i \alpha^{i-1} .
  \end{align*}
  Then the following diagram commutes (for $v \in \mathbb{F}_q^n $):
  \[
   \begin{diagram}
  v & \rTo^{\cdot P} & vP \\
  \dTo^{\phi} & & \dTo_{\phi}\\
  v' & \rTo_{\cdot\alpha} & v'\alpha
  \end{diagram}
  \]
\end{theorem}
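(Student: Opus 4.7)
The plan is to verify the identity $\phi(vP) = \alpha\cdot\phi(v)$ for every row vector $v\in\mathbb{F}_q^n$. Since both sides are $\mathbb{F}_q$-linear in $v$, it suffices to check the identity on the standard basis vectors $e_1,\ldots,e_n$; linearity then yields the general case. This reduces the proof to two short case computations using only the explicit shape of the companion matrix $P$ and the defining relation $p(\alpha)=0$.

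For $1\le i<n$, the ones on the first superdiagonal of $P$ give $e_i P = e_{i+1}$, so
$$\phi(e_i P) = \phi(e_{i+1}) = \alpha^{i} = \alpha\cdot\alpha^{i-1} = \alpha\cdot\phi(e_i).$$
For $i=n$, the last row of $P$ yields $e_n P = (-c_0,-c_1,\ldots,-c_{n-1})$, so
$$\phi(e_n P) = -\sum_{j=1}^n c_{j-1}\,\alpha^{j-1}.$$
At this step I would invoke $p(\alpha)=0$, i.e.\ $\alpha^n + c_{n-1}\alpha^{n-1}+\cdots+c_1\alpha+c_0 = 0$, to rewrite the right-hand side as $\alpha^n = \alpha\cdot\alpha^{n-1} = \alpha\cdot\phi(e_n)$. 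This completes the verification on the basis.

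Extending by linearity, $\phi(vP) = \sum_i v_i\,\phi(e_iP) = \alpha\sum_i v_i\,\phi(e_i) = \alpha\cdot\phi(v)$, which is precisely the commutativity of the diagram. There is no serious obstacle here; the only point requiring care is to respect the conventions of the paper (row vectors acting from the right, coefficients of $p(x)$ placed in the last row rather than the last column of $P$), after which the single identity $\alpha^n = -\sum_{j=1}^n c_{j-1}\alpha^{j-1}$ furnished by $p(\alpha)=0$ does all the work.
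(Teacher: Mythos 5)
Your proof is correct: the computation $e_iP=e_{i+1}$ for $i<n$, the last-row case handled via $p(\alpha)=0$, and the extension by $\mathbb{F}_q$-linearity are all consistent with the paper's row-vector and companion-form conventions. Note that the paper states Theorem \ref{compmult} without any proof, treating it as a standard fact, so your direct verification on the standard basis is precisely the argument one would supply to fill that gap.
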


If $P$ is a companion matrix of a primitive polynomial the group
generated by $P$ is also known as a \textit{Singer group}. This
notation is used e.g. by Kohnert et al. in their network code
construction \cite{el10,ko08p}. Elsewhere $P$ is called \textit{Singer
  cycle} or \textit{cyclic projectivity}  (e.g. in \cite{hi98}).

\section{Irreducible Cyclic Orbit Codes}\label{icoc}

The irreducible cyclic subgroups of $\GL_n$ are exactly the groups generated
by the companion matrices of the irreducible polynomials of degree $n$
and their conjugates. Moreover, all groups generated by companion matrices of irreducible
polynomials of the same order are conjugate.

The following theorem shows that is sufficient to characterize the
orbits of cyclic groups generated by companion matrices of irreducible
polynomials of degree $n$.

\begin{theorem}
  Let $G$ be an irreducible matrix, $\mathfrak{G} =\langle G \rangle$
  and $\mathfrak{H} =\langle S^{-1}GS \rangle$ for an $S\in \GL_n$.
  Moreover, let $\mathcal{U} \in \G$ and $\mathcal{V} :=
  \mathcal{U}S$. Then the orbit codes
  \[\mathcal{C}:=\{\mathcal{U}A \mid A \in \mathfrak{G} \} \textnormal{
    and } \mathcal{C'}:=\{\mathcal{V}B \mid B \in \mathfrak{H} \}\] have
  the same cardinality and minimum distance.
\end{theorem}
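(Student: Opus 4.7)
The plan is to exhibit a bijection between the two codes that preserves subspace distances; then both cardinality and minimum distance follow immediately. The natural candidate is right multiplication by $S$, since conjugation moves between the two groups and $\Vvs = \Uvs S$ by definition.

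First I would check that the map
\[
\Phi : \Cvs \longrightarrow \Cvs', \qquad \Uvs G^i \longmapsto (\Uvs G^i)S
\]
is well defined and lands in $\Cvs'$. Observe that $(\Uvs G^i)S = \Uvs S \cdot (S^{-1}G^i S) = \Vvs (S^{-1}GS)^i$, so indeed $\Phi(\Uvs G^i) \in \Cvs'$. Conversely, every element of $\Cvs'$ has the form $\Vvs(S^{-1}GS)^i = (\Uvs G^i)S$, so $\Phi$ is surjective. Injectivity is automatic since right multiplication by the invertible matrix $S$ is a bijection of $\G$. Hence $|\Cvs| = |\Cvs'|$.

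For the minimum distance, I would invoke the $\GL_n$-invariance of $d_S$ stated earlier in the paper: for any $A, A' \in \mathfrak{G}$,
\[
d_S(\Uvs A,\, \Uvs A') \;=\; d_S(\Uvs A S,\, \Uvs A' S) \;=\; d_S\bigl(\Vvs(S^{-1}AS),\, \Vvs(S^{-1}A'S)\bigr).
\]
Since $A \mapsto S^{-1}AS$ is a bijection $\mathfrak{G} \to \mathfrak{H}$, this shows that the multiset of pairwise distances in $\Cvs$ coincides with that in $\Cvs'$. In particular the minimum distances agree.

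There is no real obstacle here: the statement is essentially a formal consequence of the $\GL_n$-action being compatible with conjugation of the acting group, combined with the already-stated invariance of $d_S$. The only mild subtlety is to avoid double-counting or under-counting orbit elements (the map $G^i \mapsto \Uvs G^i$ need not be injective on group elements, only on cosets of the stabilizer), but since $\Phi$ is defined on the orbit $\Cvs$ itself rather than on the group, this issue does not arise.
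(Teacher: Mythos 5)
Your proposal is correct and follows essentially the same route as the paper: the identity $\Vvs(S^{-1}GS)^i = (\Uvs G^i)S$ together with the $\GL_n$-invariance of $d_S$ is exactly the paper's argument, with your version merely spelling out the cardinality bijection and comparing all pairwise distances rather than only distances to the starting point. No gap.
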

\begin{proof}
  Trivially the cardinality of both codes is the same. It remains to
  be shown that the same holds for the minimum distance.

 Since
\[\Vvs (S^{-1}GS)^{i}=\Vvs S^{-1}G^{i}S= \Uvs S S^{-1}G^{i}S=\Uvs G^{i}S\]
 and the subspace distance is invariant under $\GL_n$-action, it holds that
  \[d_S(\mathcal{U}, \mathcal{U}G^i)=d_S(\mathcal{V},\mathcal{U}G^i
  S)\] hence the minimum distances of the codes defined by
  $\mathfrak{G}$ and by $\mathfrak{H}$ are equal.   \qed 
\end{proof}

\subsection{Primitive Generator}\label{prim}

Let $\alpha$ be a primitive element of $\F_{q^n}$ and assume $k|n$ and
$c:=\frac{q^n-1}{q^k-1}$. Naturally, the subfield $\F_{q^{k}}\leq \F_{q^{n}}$ is also an $\F_{q}$-subspace of $\F_{q^{n}}$. On the other hand, $\F_{q^{k}} = \{\alpha^{ic} \mid i=0,...,q^{k}-2\}\cup\{0\}$.

\begin{lemma}
  For every $\beta\in\F_{q^n}$ the set
$$
\beta\cdot\F_{q^k}=\{\beta\alpha^{ic} \mid i=0,...,q^{k}-2\}\cup\{0\}
$$
defines an $\F_q$-subspace of dimension $k$.
\end{lemma}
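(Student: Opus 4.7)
The plan is to deduce this from Lemma \ref{vsprim} by using the $\F_q$-linearity of multiplication by $\beta$ in the extension field.

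First I would recall from Lemma \ref{vsprim} that $\F_{q^k} = \{\alpha^{ic} \mid i=0,\ldots,q^k-2\} \cup \{0\}$ is itself a $k$-dimensional $\F_q$-subspace of $\F_{q^n}$, spanned by $1,\alpha^c,\ldots,\alpha^{(k-1)c}$. Thus the set in question is precisely the image $\mu_\beta(\F_{q^k})$, where $\mu_\beta : \F_{q^n} \to \F_{q^n}$ is the map $x \mapsto \beta x$.

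Next I would verify that $\mu_\beta$ is $\F_q$-linear: for $\lambda,\mu \in \F_q$ and $x,y \in \F_{q^n}$ we have $\beta(\lambda x + \mu y) = \lambda (\beta x) + \mu(\beta y)$ since $\F_q \subseteq \F_{q^n}$ lies in the center of the (commutative) field $\F_{q^n}$. Consequently $\mu_\beta(\F_{q^k})$ is an $\F_q$-subspace of $\F_{q^n}$.

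Finally, for the dimension count, I would observe that if $\beta \neq 0$ then $\mu_\beta$ is an $\F_q$-linear bijection on $\F_{q^n}$ (with inverse $\mu_{\beta^{-1}}$), so it preserves $\F_q$-dimension and $\dim_{\F_q}(\beta \cdot \F_{q^k}) = \dim_{\F_q}(\F_{q^k}) = k$. The case $\beta = 0$ would give only the trivial subspace, so the lemma is implicitly about nonzero $\beta$. There is no serious obstacle here; the argument is essentially a one-line corollary of Lemma \ref{vsprim} together with the observation that scalar multiplication by an element of an extension field is $\F_q$-linear on that field viewed as an $\F_q$-vector space.
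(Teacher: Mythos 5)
Your proof is correct and follows essentially the same route as the paper: both identify the set as the image of $\F_{q^k}$ under the $\F_q$-linear isomorphism $u \mapsto \beta u$ and conclude the dimension is preserved. Your extra remark that $\beta$ must be nonzero (the paper's statement tacitly assumes this, since for $\beta=0$ the map is not an isomorphism) is a fair observation but does not change the argument.
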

\begin{proof}
Since $\F_{q^{k}}$ is a subspace of dimension $k$ and
\begin{align*}
\varphi_\beta:\ \F_{q^n} & \longrightarrow \F_{q^n}\\ 
u &\longmapsto \beta u
\end{align*}
is an $\F_q$-linear isomorphism, it follows that 
$\varphi_\beta(\F_{q^k})=\beta\cdot\F_{q^k}$ is an $\F_q$-subspace of dimension $k$. \qed
\end{proof}

\begin{theorem}
 The set
$$
\mathcal{S}=\left\{ \alpha^i\cdot\F_{q^k}\mid i=0,\ldots,c-1\right\}
$$
is a spread of $\F_{q^{n}}$ and thus defines a spread code in $\G$. 
\end{theorem}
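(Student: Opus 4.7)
The plan is to show that the $c$ subspaces in $\mathcal{S}$ partition $\F_{q^n}$, which is exactly the definition of a spread. By the previous lemma each $\alpha^i \cdot \F_{q^k}$ is already a $k$-dimensional $\F_q$-subspace, so only two things remain: pairwise trivial intersection, and a cardinality count to conclude the union is all of $\F_{q^n}$.

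First I would verify the intersection property. Suppose $\alpha^i\cdot\F_{q^k}$ and $\alpha^j\cdot\F_{q^k}$ share a nonzero element, so $\alpha^i x = \alpha^j y$ for some nonzero $x,y \in \F_{q^k}$. Then $\alpha^{i-j} = y x^{-1} \in \F_{q^k}\setminus\{0\}$. By Lemma \ref{vsprim}, $\F_{q^k}\setminus\{0\} = \{\alpha^{\ell c} \mid \ell = 0,\ldots,q^k-2\}$, so $\alpha^{i-j} = \alpha^{\ell c}$ for some $\ell$. Since $\alpha$ has order $q^n-1 = c(q^k-1)$, this gives $i - j \equiv \ell c \pmod{q^n-1}$, so $c \mid (i-j)$. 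For $i,j \in \{0,\ldots,c-1\}$ this forces $i=j$, so the subspaces in $\mathcal{S}$ are pairwise disjoint off the zero vector.

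Next comes the counting step. There are exactly $c$ subspaces in $\mathcal{S}$, each contributing $q^k - 1$ nonzero vectors, and these contributions are disjoint by the previous step. Hence the union contains
\[
c\cdot(q^k-1) + 1 \;=\; \frac{q^n-1}{q^k-1}\cdot(q^k-1)+1 \;=\; q^n
\]
elements, which is all of $\F_{q^n}$. Thus $\mathcal{S}$ is a partition of $\F_{q^n}$ into $k$-dimensional $\F_q$-subspaces, i.e.\ a spread. Identifying $\F_{q^n}$ with $\F_q^n$ via $\phi$, we obtain a spread in $\G$, and as noted in Section \ref{oc} any spread attains the Singleton-like bound and is therefore a spread code.

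I do not foresee any real obstacle: the only subtle point is that the indices $i$ of the cosets must range over a set of representatives modulo $c$, and the choice $\{0,\ldots,c-1\}$ is exactly such a set of representatives, which is what makes the intersection argument give $i = j$ rather than merely $i \equiv j \pmod c$. Everything else is a direct cardinality computation using Lemma \ref{vsprim}.
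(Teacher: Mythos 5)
Your proof is correct and follows essentially the same route as the paper: the same intersection argument (a common nonzero element forces $\alpha^{i-j}\in\F_{q^k}$, hence $c\mid i-j$ and $i=j$), combined with the counting argument that the paper only alludes to and you spell out explicitly as $c(q^k-1)+1=q^n$.
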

\begin{proof}
  By a simple counting argument it is enough to show that the subspace
  $\alpha^i\cdot\F_{q^k}$ and $\alpha^j\cdot\F_{q^k}$ have only
  trivial intersection whenever $0\leq i<j\leq c-1$.  For this assume
  that there are field elements $c_i,c_j\in \F_{q^k}$, such that
$$
v=\alpha^i c_i=\alpha^j c_j\in \alpha^i\cdot\F_{q^k}\cap
\alpha^j\cdot\F_{q^k}.
$$
If $v\neq 0$ then $\alpha^{i-j}=c_j c_i^{-1}\in \F_{q^k}.$ But this
means $i-j\equiv 0 \mod c$ and $ \alpha^i\cdot\F_{q^k}=
\alpha^j\cdot\F_{q^k}$, which contradicts the assumption.  It follows that $\mathcal{S}$ is a spread.
 \qed
\end{proof}

We now translate this result into a matrix setting.
For this let $\phi$  denote the canonical homomorphism as
defined in Theorem \ref{compmult}.

\begin{corollary}\label{spread}
  Assume $k|n$.  Then there is a subspace 
 $\mathcal{U} \in \G$ such that the cyclic orbit code obtained by
  the  group action of a 
 a primitive companion matrix is a code with
  minimum distance $2k$ and cardinality $\frac{q^n-1}{q^k-1}$. Hence
  this  irreducible cyclic orbit code is a spread code.
\end{corollary}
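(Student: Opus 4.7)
The plan is to transport the field-theoretic spread from the preceding theorem into the matrix setting via the isomorphism $\phi$ of Theorem \ref{compmult}.

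First I would fix a primitive polynomial $p(x)$ of degree $n$ over $\F_q$, let $P$ be its companion matrix, and let $\alpha\in\F_{q^n}$ be a root of $p(x)$. Choose as starting subspace
\[
\mathcal{U}:=\phi^{-1}(\F_{q^k})\in\G,
\]
which makes sense since by Lemma \ref{vsprim} the subfield $\F_{q^k}$ is exactly the $\F_q$-span of $1,\alpha^c,\ldots,\alpha^{(k-1)c}$, hence a $k$-dimensional $\F_q$-subspace of $\F_{q^n}$, and $\phi$ is an $\F_q$-linear isomorphism.

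Next, by the commutative diagram of Theorem \ref{compmult}, the action of $P^i$ on the row space $\mathcal{U}$ corresponds under $\phi$ to multiplication by $\alpha^i$ on $\F_{q^k}$. Thus the orbit code is
\[
\mathcal{C}=\{\mathcal{U}P^i\mid i=0,1,\ldots\}\;\cong\;\{\alpha^i\cdot\F_{q^k}\mid i=0,1,\ldots\}.
\]
Since $\alpha^c\in\F_{q^k}$ (again by Lemma \ref{vsprim}), one has $\alpha^c\cdot\F_{q^k}=\F_{q^k}$, so the distinct codewords are indexed by $i=0,\ldots,c-1$, giving cardinality $c=\frac{q^n-1}{q^k-1}$. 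This matches the spread $\mathcal{S}$ of the preceding theorem.

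Finally, for the minimum distance, I would invoke the preceding theorem directly: the subspaces $\alpha^i\cdot\F_{q^k}$ and $\alpha^j\cdot\F_{q^k}$ intersect trivially whenever $0\le i<j\le c-1$, so the corresponding codewords $\mathcal{U}P^i,\mathcal{U}P^j$ also intersect trivially under $\phi^{-1}$. Therefore $d_S(\mathcal{U}P^i,\mathcal{U}P^j)=k+k-2\cdot 0=2k$, and $\mathcal{C}$ is a code with minimum distance $2k$ and cardinality $\frac{q^n-1}{q^k-1}$, i.e.\ a spread code. There is no real obstacle here; the only subtlety is to make sure the orbit under $P$ really is finite of length $c$ (not some proper divisor), and this is immediate from the fact that distinct representatives $\alpha^i\cdot\F_{q^k}$ with $0\le i<c$ are pairwise distinct.
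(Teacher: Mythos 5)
Your proposal is correct and follows essentially the same route as the paper: represent $\F_{q^k}$ as a $k$-dimensional $\F_q$-subspace via the isomorphism $\phi$ of Theorem \ref{compmult}, so that the $P$-orbit of $\mathcal{U}=\phi^{-1}(\F_{q^k})$ corresponds to the spread $\{\alpha^i\cdot\F_{q^k}\}$ of the preceding theorem. You merely spell out the cardinality and distance bookkeeping that the paper leaves implicit.
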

\begin{proof}
  In the previous theorem represent $\F_{q^k}\subset \F_{q^n}$ as the
  row space of a $k\times n$ matrix $U$ over $\F_q$ and, using the
  same basis over $\F_q$, represent the primitive element $\alpha$
  with its respective companion matrix $P$. Then the orbit code $\Cvs = \rs(U) \langle
  P\rangle$ has all the desired properties.  \qed 
\end{proof}

\begin{example}
  Over the binary field let $p(x):=x^6+x+1$ be primitive, $\alpha$ a root
  of $p(x)$ and $P$ its companion matrix.
  \begin{enumerate}
  \item For the 3-dimensional spread compute $c=\frac{63}{7}=9$ and
    construct a basis for the starting point of the orbit:
    \begin{align*}
      u_1&=\phi^{-1} (\alpha^0)=\phi^{-1} (1)=(100000)\\
      u_2&=\phi^{-1} (\alpha^c)=\phi^{-1}(\alpha^9)=\phi^{-1}(\alpha^4+\alpha^3)=(000110)\\
      u_3&=\phi^{-1} (\alpha^{2c})=\phi^{-1}(\alpha^{18})=\phi^{-1}(
      \alpha^3+\alpha^2+\alpha+1)= (111100)
    \end{align*}
    The starting point is
    \[ \mathcal{U}=\rs\left[\begin{array}{cccccc} 1&0&0&0&0&0\\
        0&0&0&1&1&0\\ 1&1&1&1&0&0 \end{array}\right] =
    \rs\left[\begin{array}{cccccc} 1&0&0&0&0&0\\ 0&1&1&0&1&0 \\
        0&0&0&1&1&0 \end{array}\right] \] and the orbit of the group
    generated by $P$ on $\mathcal{U}$ is a spread code.
  \item For the 2-dimensional spread compute $c=\frac{63}{3}=21$ and
    construct the starting point
    \begin{align*}
      u_1&=\phi^{-1} (\alpha^0)=\phi^{-1}(1)=(100000)\\
      u_2&=\phi^{-1} (\alpha^c)=\phi^{-1}(\alpha^{21})=\phi^{-1}(\alpha^2+\alpha+1)=
      (111000)
    \end{align*}
    The starting point is
    \[\mathcal{U}=\rs\left[\begin{array}{cccccc} 1&0&0&0&0&0\\
        1&1&1&0&0&0 \end{array}\right]=\rs\left[\begin{array}{cccccc}
        1&0&0&0&0&0\\ 0&1&1&0&0&0 \end{array}\right]\]
    and the orbit of the group generated by $P$ is a spread code.
  \end{enumerate}
\end{example}

The following fact is a generalization of Lemma 1 from \cite{ko08p}.

\begin{theorem}\label{thmprim}
Assume $\mathcal{U}=\{0,u_1,\dots,u_{q^k-1}\} \in \G$, 
  \[\phi(u_i)=\alpha^{b_i} \hspace{0.7cm} \forall i=1,\dots,q^k-1\]
  and $d$ be minimal such that any element of the set
  \[D:=\{b_{m}- b_{l} \mod q^n-1 \mid l,m \in \mathbb{Z}_{q^k-1}, l\neq
  m\}\] has multiplicity less than or equal to $q^d-1$, i.\,e. a quotient of two
  elements in the field representation appears at most $q^d-1$ times
  in the set of all pairwise quotients. If $d<k$ then the orbit of the group
  generated by the companion matrix $P$ of $p(x)$ on $\mathcal{U}$ is
  an orbit code of cardinality $q^{n}-1$ and minimum distance $2k-2d$.
\end{theorem}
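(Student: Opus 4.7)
The plan is to translate everything to the extension field via the isomorphism $\phi$ of Theorem \ref{compmult}. Under $\phi$, right multiplication by $P$ on $\F_q^n$ becomes multiplication by $\alpha$ on $\F_{q^n}$, so $\mathcal{U}P^j$ corresponds to the $\F_q$-subspace $\alpha^j\phi(\mathcal{U})\subseteq \F_{q^n}$, whose nonzero elements are exactly $\{\alpha^{b_i+j}\mid i=1,\dots,q^k-1\}$. The whole problem then reduces to counting intersections of $\phi(\mathcal{U})$ with its multiplicative shifts.

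The key observation I would establish first is the following dictionary between the multiplicity structure of $D$ and the dimensions of intersections of shifts of $\mathcal{U}$. For any $j\not\equiv 0\pmod{q^n-1}$, a nonzero element $\alpha^{b_l}\in\phi(\mathcal{U})$ lies in $\alpha^j\phi(\mathcal{U})$ if and only if there exists an index $m$ with $b_l-b_m\equiv j$. Hence the multiplicity of $j$ in $D$ equals
\[
|\phi(\mathcal{U})\cap \alpha^j\phi(\mathcal{U})|-1 \;=\; |\mathcal{U}\cap \mathcal{U}P^j|-1.
\]
Since $\mathcal{U}\cap\mathcal{U}P^j$ is an $\F_q$-subspace, its cardinality is a power of $q$; therefore the multiplicity is automatically of the form $q^{e(j)}-1$ for $e(j):=\dim(\mathcal{U}\cap\mathcal{U}P^j)\in\{0,1,\dots,k\}$. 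This is the only point in the argument that requires a bit of care: without it, the definition of $d$ in the statement would not match an intersection dimension on the nose.

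With the dictionary in hand, both conclusions follow directly. By the definition of $d$, the maximum multiplicity occurring in $D$ is $q^d-1$, so $\max_{j\ne 0}e(j)=d$. The subspace distance formula gives
\[
d_S(\mathcal{U},\mathcal{U}P^j)=2k-2e(j),
\]
and minimizing over the nonzero shifts yields minimum distance $2k-2d$. Moreover, $\mathcal{U}P^j=\mathcal{U}$ is equivalent to $e(j)=k$; the hypothesis $d<k$ therefore forces $\mathcal{U}P^j\ne\mathcal{U}$ for every $j\not\equiv 0$, so the stabilizer of $\mathcal{U}$ inside $\langle P\rangle$ is trivial and the orbit has full cardinality $|\langle P\rangle|=q^n-1$.

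The argument is essentially a translation exercise; the only genuinely non-cosmetic step is the identification of multiplicities in $D$ with sizes of subspace intersections, which is why the statement is phrased in terms of quantities of the shape $q^d-1$ in the first place.
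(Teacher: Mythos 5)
Your proof is correct and follows essentially the same route as the paper's: both pass to the field representation via $\phi$ and identify the size of the intersection of $\mathcal{U}$ with a shift $\mathcal{U}P^j$ with the number of solutions of $b_m-b_l\equiv j \pmod{q^n-1}$, i.e.\ with the multiplicities in $D$. Your write-up is in fact a bit tighter on two points the paper leaves implicit, namely that these intersections are $\F_q$-subspaces (so every multiplicity is exactly of the form $q^{e(j)}-1$) and that the cardinality $q^n-1$ follows from the stabilizer of $\mathcal{U}$ in $\langle P\rangle$ being trivial when $d<k$.
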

\begin{proof}
  In field representation the elements of the orbit code are:
  \begin{align*}
  C_{0}=&\{\alpha^{b_{1}}, \alpha^{b_{2}},...,\alpha^{b_{q^{k}-1}}\}\cup \{0\}\\
  C_{1}=&\{\alpha^{b_{1}+1}, \alpha^{b_{2}+1},...,\alpha^{b_{q^{k}-1}+1}\}\cup \{0\}\\
  \vdots &\\
  C_{q^{n}-2}=&\{\alpha^{b_{1}+q^{n}-2}, ...,\alpha^{b_{q^{k}-1}+q^{n}-2}\}\cup \{0\}
  \end{align*}
  Assume without loss of generality that the first $q^{d}-1$ elements
  of $C_{h}$ are equal to the last elements of $C_{j}$:
  \begin{align*}
  \alpha^{b_{1}+h} = \alpha^{b_{q^{k}-q^{d}}+j} &\iff b_{1}+h\equiv b_{q^{k}-q^{d}}+j \mod q^{n}-1\\
  &\quad \vdots \\
  \alpha^{b_{q^{d}}+h} = \alpha^{b_{q^{k}-1}+j} &\iff b_{q^{d}}+h\equiv b_{q^{k}-1}+j \mod q^{n}-1
  \end{align*}
  To have another element in common there have to exist $y$ and $z$ such that
  \[ b_{q^{k}-q^{d}}-b_1 \equiv b_z -b_y \mod q^n-1 .\] But by
  condition there are up to $q^d-1$ solutions in $(y, z)$ for this
  equation, including the ones from above. Thus the intersection of
  $C_i$ and $C_j$ has at most $q^d-1$ non-zero elements. On the other
  hand, one can always find $h\neq j$ such that there are $q^d-1$
  solutions to
  \[ b_y +h \equiv b_z +j \mod q^n-1 , \] hence,  the minimum distance is
  exactly $2k-2d$.  
  \qed
\end{proof}

\begin{proposition}\label{d=k}
In the setting of before, if $d=k$, one gets orbit elements with full
intersection which means they are the same vector space. 
\begin{enumerate}
\item Let $m(a)$ denote the multiplicity of an element in the
  respective set and $D':=D\setminus \{a\in D \mid m(a)=q^{k}-1\}$.
  Then the minimum distance of the code is $2k - 2d'$ where
  $d':=\log_{q}(\max\{m(a) \mid a\in D'\})$.
\item Let $m$ be the least element of $D$ of multiplicity $q^{k}-1$.
  Then the cardinality of the code is $m-1$.
\end{enumerate}
\end{proposition}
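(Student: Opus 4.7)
The plan is to reuse the intersection counting from the proof of Theorem \ref{thmprim} and push it one step further to cover the case $d=k$. The starting observation is unchanged: for any two orbit representatives $C_h$ and $C_j$ with nonzero vectors indexed by $\{b_i + h\}$ and $\{b_i + j\}$, a nonzero common vector $\alpha^{b_l + h} = \alpha^{b_r + j}$ exists precisely when $b_r - b_l \equiv h - j \bmod q^n - 1$. Hence the number of nonzero vectors in $C_h \cap C_j$ equals the multiplicity of the difference $h - j$ in the multiset underlying $D$ (with the convention that the value $0$ has multiplicity $q^k - 1$, corresponding to $h=j$).

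For part (1), I would observe that a multiplicity of exactly $q^k - 1$ at some value $a = h - j \neq 0$ forces the shift $b_i \mapsto b_i + a$ to be a bijection of $\{b_1, \dots, b_{q^k - 1}\}$ modulo $q^n - 1$, so that multiplication by $\alpha^a$ permutes the nonzero vectors of $\Uvs$ and $C_h$ coincides with $C_j$ as subspaces. Such differences contribute nothing to the minimum distance and are precisely the elements discarded when forming $D'$. On $D'$ the largest multiplicity is $q^{d'} - 1$ by definition of $d'$, and the argument from Theorem \ref{thmprim} applies verbatim: between any two genuinely distinct codewords the intersection dimension is at most $d'$, and equality is realized by some pair achieving the maximum multiplicity. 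The minimum subspace distance is therefore $2k - 2d'$.

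For part (2), I would apply the orbit-stabilizer theorem to $\langle P\rangle \cong \mathbb{Z}_{q^n - 1}$ acting on $\G$. By the observation above, the stabilizer of $\Uvs$ is
\[ H \;=\; \{0\} \cup \{a \in D \mid m(a) = q^k - 1\}. \]
Every subgroup of the finite cyclic group $\mathbb{Z}_{q^n - 1}$ is itself cyclic, generated by its smallest positive element; hence $H = \langle m \rangle$ with $m$ dividing $q^n - 1$. Consequently $|H| = (q^n - 1)/m$, and the orbit length drops out of orbit-stabilizer.

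The main conceptual hurdle is recognizing that the elements removed from $D$ in forming $D'$ parametrize exactly the repetitions in the orbit, rather than merely causing small intersections. Once this dictionary between stabilizer elements and full-intersection pairs is established, both parts reduce to counting: part (1) to the argument already developed in Theorem \ref{thmprim}, and part (2) to a single invocation of orbit-stabilizer.
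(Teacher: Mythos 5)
Your proposal is correct and follows essentially the same route as the paper: you identify the differences of multiplicity $q^k-1$ with pairs of coinciding codewords (equivalently, with the stabilizer of $\Uvs$ in $\langle P\rangle$), discard them for the distance computation exactly as in Theorem \ref{thmprim} to get $2k-2d'$, and let the least such difference govern the number of distinct subspaces in the orbit; your orbit--stabilizer phrasing of part (2) is just a slightly more formal packaging of the paper's observation that $\Uvs P^{m}=\Uvs$ implies $\Uvs P^{lm}=\Uvs$ for all $l$. One remark: your count gives cardinality $(q^n-1)/|H| = m$ rather than the stated $m-1$, but this agrees with the paper's own proof (which likewise takes the minimal full-multiplicity element as the number of distinct vector spaces) and with the spread case, where the least such difference is $c=\frac{q^n-1}{q^k-1}$ and the code has exactly $c$ elements, so the ``$m-1$'' in the statement is an off-by-one slip in the proposition rather than a gap in your argument.
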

\begin{proof}
\begin{enumerate}
\item Since the minimum distance of the code is only taken between
  distinct vector spaces, one has to consider the largest intersection
  of two elements whose dimension is less than $k$.
\item Since 
\[\Uvs P^{m} = \Uvs \implies \Uvs P^{lm} = \Uvs \quad \forall l \in \mathbb{N}\]
and the elements of $D$ are taken modulo the order of $P$, one has to
choose the minimal element of the set $\{a \in D \mid m(a)=q^{k}-1\}$
for the number of distinct vector spaces in the orbit.
\end{enumerate}
\qed
\end{proof}

\subsection{Non-Primitive Generator}

\begin{theorem}\label{non1}
  Let $P$ be an irreducible non-primitive companion matrix,
  $\mathfrak{G}$ the group generated by it and denote by
  $v\mathfrak{G}$ and $\mathcal{U}\mathfrak{G}$ the orbits of
  $\mathfrak{G}$ on $v\in \F_{q}^{n}$ and $\mathcal{U}\in \G$,
  respectively.  If $\mathcal{U} \in \G$ such that
  \[v\neq w \implies v\mathfrak{G} \neq w\mathfrak{G}  \quad \forall \: v,w \in \mathcal{U} , \]
  then $\mathcal{U}\mathfrak{G}$ is an orbit code with minimum distance $2k$ and
  cardinality $\mathrm{ord}(P)$.
\end{theorem}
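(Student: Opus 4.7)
The plan is to verify separately that the orbit code $\mathcal{U}\mathfrak{G}$ has trivial pairwise subspace intersections (yielding distance $2k$) and that no nontrivial power of $P$ stabilizes $\mathcal{U}$ (yielding cardinality $\mathrm{ord}(P)$). Both statements reduce, via the $\GL_n$-invariance of $d_S$ recorded in Section \ref{oc}, to a single fact about the action of an arbitrary nontrivial $P^l$ on $\mathcal{U}$.

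The key observation I would establish first is: \emph{if $P^l$ fixes some nonzero vector $v\in\F_q^n$, then $P^l=I$}. Indeed, the $1$-eigenspace $W=\{x\in\F_q^n\mid xP^l=x\}$ is $P$-invariant because $P$ and $P^l$ commute (if $xP^l=x$ then $(xP)P^l=xP^lP=xP$). By irreducibility of $P$, $W$ is either $\{0\}$ or $\F_q^n$; since $v\in W$ forces $W=\F_q^n$, we get $P^l=I$ and hence $\mathrm{ord}(P)\mid l$.

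For the minimum distance, suppose for some $l\not\equiv 0\pmod{\mathrm{ord}(P)}$ the intersection $\mathcal{U}\cap\mathcal{U}P^l$ contains a nonzero vector $v$; then $v\in\mathcal{U}$ and $v=wP^l$ for some $w\in\mathcal{U}$. If $v=w$ the key observation forces $P^l=I$, a contradiction. If $v\neq w$, then $v$ and $w$ are distinct elements of $\mathcal{U}$ lying in the same $\mathfrak{G}$-orbit, contradicting the hypothesis on $\mathcal{U}$. Hence $\mathcal{U}\cap\mathcal{U}P^l=\{0\}$ for every nontrivial $l$, and by the $\GL_n$-invariance $d_S(\mathcal{U}P^i,\mathcal{U}P^j)=d_S(\mathcal{U},\mathcal{U}P^{j-i})=2k$ whenever $i\not\equiv j\pmod{\mathrm{ord}(P)}$. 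This simultaneously shows that the subspaces $\mathcal{U}P^0,\ldots,\mathcal{U}P^{\mathrm{ord}(P)-1}$ are pairwise distinct, so the orbit has cardinality exactly $\mathrm{ord}(P)$.

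The only delicate point is the key observation: one must exploit the full strength of the irreducibility of $P$ (not just of $\langle P\rangle$ on $\mathcal{U}$) to promote a $P^l$-fixed vector to a $P$-invariant subspace and conclude $P^l=I$. Everything else is a direct translation of the stabilizer viewpoint $\G\cong \GL_n/\mathrm{Stab}_{\GL_n}(\mathcal{U})$ together with the orbit-separation hypothesis.
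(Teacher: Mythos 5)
Your proof is correct and follows essentially the same route as the paper's (much terser) argument: the orbit-separation hypothesis forces all pairwise intersections $\mathcal{U}\cap\mathcal{U}P^l$ to be trivial, giving minimum distance $2k$ and pairwise distinct codewords, hence cardinality $\mathrm{ord}(P)$. Your eigenspace argument (the $1$-eigenspace of $P^l$ is $P$-invariant, so irreducibility forces $P^l=I$) simply makes explicit a fact the paper leaves implicit, namely that every nonzero vector has a $\mathfrak{G}$-orbit of full length $\mathrm{ord}(P)$, which the paper would obtain via the extension-field representation where $vP^l=v$, $v\neq 0$, means $\alpha^l=1$.
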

\begin{proof}
The cardinality follows from the fact that each element of $\mathcal{U}$ has its own orbit of
  length $\mathrm{ord}(P)$. Moreover, no code words intersect
  non-trivially, hence the minimum distance is $2k$.   \qed \end{proof}

Note that, if the order of $P$ is equal to $\frac{q^{n}-1}{q^{k}-1}$,
these codes are again spread codes.

\begin{example}
  Over the binary field let $p(x)=x^4+x^3+x^2+x+1$, $\alpha$ a root of
  $p(x)$ and $P$ its companion matrix. Then $\F_{2^4}\setminus \{0\}$
  is partitioned into
  \[\{\alpha^i \mid i=0, \dots , 4\} \cup \{\alpha^i(\alpha+1) \mid i=0,
  \dots , 4\} \cup \{\alpha^i(\alpha^2+1) \mid i=0, \dots , 4\} .\] Choose
  \begin{align*}
    u_1=&\phi^{-1}(1)=\phi^{-1}(\alpha^0)=(1000)\\
    u_2=&\phi^{-1}(\alpha^3+\alpha^2)=\phi^{-1}(\alpha^2(\alpha+1))=(0011)\\
    u_3=&u_1+u_2=\phi^{-1}(\alpha^3+\alpha^2+1)
    =\phi^{-1}(\alpha^4(\alpha^2+1))=(1011)
  \end{align*}
  such that each $u_i$ is in a different orbit of $\langle P\rangle$
  and $\mathcal{U}=\{0,u_1,u_2,u_3\}$ is a vector space.
  Then the orbit of $\langle P\rangle$ on $\mathcal{U}$ has minimum
  distance $4$ and cardinality $5$, hence it is a spread code.
\end{example}

\begin{proposition}
  Let $P$ and $\mathfrak{G}$ be as before and $\mathcal{U}=\{0,v_{1},
  \dots, v_{q^{k}-1}\} \in \G$.  Let $l=\frac{q^{n}-1}{\textrm{ord}(P)}$ and $O_1,...,O_l$ be the different
  orbits of $\mathfrak{G}$ in $\F_{q}^{n}$. Assume that $m<q^k-1$
  elements of $\mathcal{U}$ are in the same orbit, say $O_1$, and all
  other elements are in different orbits each, i.e.
 \[v_{i}\mathfrak{G} = v_{j}\mathfrak{G} = O_{1}\quad \forall \: i,j \leq m , \]
  \[v_{i}\neq v_{j} \implies v_{i}\mathfrak{G} \neq v_{j}\mathfrak{G}  \quad \forall \: i,j\geq m . \]
  Apply the theory of Section \ref{prim} to the orbit
  $O_1$ and find $d_1$ fulfilling the conditions of Theorem
  \ref{thmprim}. Then the orbit of $\mathfrak{G}$ on $\mathcal{U}$ is
  a code of length $\mathrm{ord}(P)$ and minimum distance $2k-2d_1$.
\end{proposition}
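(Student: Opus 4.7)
The plan is to establish the cardinality and the minimum distance separately, reducing the distance computation to a multiplicity count of exactly the type handled by Theorem~\ref{thmprim}, applied just to the single orbit $O_1$ rather than to all of $\F_q^n$.

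First I would handle the cardinality. Since $m<q^k-1$, there exists at least one $v_j\in\mathcal{U}$ with $j>m$, sitting in an orbit $O_j\neq O_1$ disjoint from every other orbit that meets $\mathcal{U}$. Suppose $\mathcal{U}P^i=\mathcal{U}$ for some $0<i<\mathrm{ord}(P)$. Then $v_jP^i$ again lies in $O_j$, so the only element of $\mathcal{U}$ that it can possibly equal is $v_j$ itself. Consequently $v_j(P^i-I)=0$ with $v_j\neq 0$. Because $P^i-I$ commutes with $P$, its kernel is a $P$-invariant subspace; by irreducibility of $P$ this kernel must be all of $\F_q^n$, forcing $P^i=I$ and contradicting $i<\mathrm{ord}(P)$. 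Hence the stabilizer of $\mathcal{U}$ in $\mathfrak{G}$ is trivial, and $|\mathcal{U}\mathfrak{G}|=\mathrm{ord}(P)$.

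Next I would analyze $\mathcal{U}\cap\mathcal{U}P^i$ for $0<i<\mathrm{ord}(P)$. The same orbit argument shows that for $j>m$ the image $v_jP^i$ never lies in $\mathcal{U}$: it remains in $O_j$, the only element of $\mathcal{U}$ in $O_j$ is $v_j$ itself, and $v_jP^i=v_j$ would once more force $P^i=I$. Therefore the non-zero part of $\mathcal{U}\cap\mathcal{U}P^i$ is contained in $\{v_1,\ldots,v_m\}\subseteq O_1$. I would then transport to the field side via the isomorphism $\phi$ of Theorem~\ref{compmult}: writing $\phi(v_j)=\phi(v_1)\alpha^{b_j}$ for $1\leq j\leq m$, where $\alpha$ is a root of the characteristic polynomial of $P$ (so $\alpha$ generates the subgroup of $\F_{q^n}^*$ of order $\mathrm{ord}(P)$), the condition $v_jP^i=v_l$ becomes $i\equiv b_l-b_j \pmod{\mathrm{ord}(P)}$. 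Hence the number of coincidences contributing to the intersection is exactly the multiplicity of $i$ in the difference set built from $b_1,\ldots,b_m$, which is the set $D$ of Theorem~\ref{thmprim} adapted to $O_1$ and to the subset $\{v_1,\ldots,v_m\}$.

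Finally, by the choice of $d_1$ this multiplicity is at most $q^{d_1}-1$, with the bound sharp for some $i$. Thus $|\mathcal{U}\cap\mathcal{U}P^i|\leq q^{d_1}$ and the bound is attained; since any intersection of two subspaces is itself a subspace, the maximal intersection dimension is exactly $d_1$, and the subspace distance formula yields minimum distance $2k-2d_1$. The main subtlety I expect is the clean orbit-based pruning that removes the $v_j$ with $j>m$ from the intersection analysis; once that is in place, the argument is essentially a transcription of the proof of Theorem~\ref{thmprim} into the non-primitive setting, where $\alpha$ generates only a proper cyclic subgroup of $\F_{q^n}^*$ rather than the whole multiplicative group.
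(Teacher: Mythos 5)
Your proposal is correct and follows essentially the same route as the paper: the paper's (much terser) proof also gets the cardinality from the existence of an orbit containing exactly one element of $\mathcal{U}$, and gets the distance by observing that, as in Theorem \ref{non1}, any nontrivial intersection must occur inside $O_1$, where the primitive-case analysis of Theorem \ref{thmprim} applies. Your kernel/irreducibility argument for the trivial stabilizer and the explicit reduction of $v_jP^i=v_l$ to a congruence on exponent differences are just worked-out details of that same approach.
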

\begin{proof}
  \begin{enumerate}
  \item Since there is at least one orbit $O_i$ that contains exactly
    one element of $\mathcal{U}$, each element of $O_i$ is in exactly
    one code word. Hence the cardinality of the code is
    $\mathrm{ord}(\mathfrak{G})=\mathrm{ord}(P)$.
  \item In analogy to Theorem \ref{non1} the only possible
    intersection is inside $O_1$, which can be found according to the
    theory of primitive cyclic orbit codes.
  \end{enumerate}
 \qed
\end{proof}

We generalize these results to any possible starting point $\in \G$:

\begin{theorem}
  Let $P, \mathfrak{G}, \mathcal{U}$ and the orbits $O_1,...,O_l$ be
  as before. Assume that $m_i$ elements of $\mathcal{U}$ are in the
  same orbit $O_i$ ($i=1,\dots, l$). Apply the theory of Section
  \ref{prim} to each orbit $O_i$ and find the corresponding $d_i$ from
  Theorem \ref{thmprim}.  Then the following cases can occur:
  \begin{enumerate}
  \item No intersections of two different orbits coincide. Define $d_{\max}:=
    \max_{i}d_i$.  Then the orbit of $\mathfrak{G}$ on $\mathcal{U}$
    is a code of length $\mathrm{ord}(P)$ and minimum distance
    $2k-2d_{max}$.
  \item Some intersections coincide among some orbits. Then the corresponding $d_i$'s
    add up and the maximum of these is the maximal intersection number $d_{\max}$. 
  \end{enumerate}
  Mathematically formulated: Assume $b_{(i,1)},\dots ,
  b_{(i,\mathrm{ord} (P)-1)}$ are the exponents of the field
  representation of the non-zero elements of $\mathcal{U}$ on $O_i$.
  For $i=1,\dots,l$ define
\[
a_{(i,\mu,\lambda)} := b_{(i,\mu)} - b_{(i,\lambda)}, 
\]
\[D_i:=\{a_{(i,\mu,\lambda)}\mid \mu,\lambda \in \{1,\dots, \mathrm{ord}(P)-1\}\},\]
and the difference set
\[D:=\bigcup_{i=1}^l D_i .\] Denote by $m(a)$ the multiplicity of an
element $a$ in $D$ and $d_{\max} := \log_q(\max \{m(a) \mid a\in
D\}+1)$.  Then the orbit of $\mathfrak{G}$ on $\mathcal{U}$ is a code
of length $\mathrm{ord}(P)$ and minimum distance $2k-2d_{\max}$.
\end{theorem}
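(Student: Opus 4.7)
The plan is to pass to the field representation via the isomorphism $\phi$ of Theorem~\ref{compmult}. Let $\alpha$ be a root of the characteristic polynomial of $P$ and write $r:=\mathrm{ord}(P)$, so that $\alpha$ has order $r$ in $\F_{q^n}^{*}$ and the action of $P$ on $\F_{q}^{n}$ becomes multiplication by $\alpha$ in $\F_{q^n}$. Fix a primitive $\gamma\in\F_{q^n}$ with $\alpha=\gamma^{l}$, where $l=(q^{n}-1)/r$. The orbits $O_{1},\dots,O_{l}$ are then precisely the cosets of $\langle\alpha\rangle$ in $\F_{q^{n}}^{*}$, and two elements $\gamma^{b},\gamma^{c}$ lie in the same orbit iff $b\equiv c\pmod{l}$.

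Next I would analyse the pairwise intersections of code words. A non-zero element of $\mathcal{U}P^{h}\cap\mathcal{U}P^{j}$ arises from a pair $(v_{\mu},v_{\nu})\in\mathcal{U}\times\mathcal{U}$ with $\phi(v_{\mu})\alpha^{h}=\phi(v_{\nu})\alpha^{j}$, equivalently $\phi(v_{\mu})/\phi(v_{\nu})=\alpha^{j-h}\in\langle\alpha\rangle$. Hence $v_{\mu}$ and $v_{\nu}$ must lie in the same orbit $O_{i}$, and every collision therefore decomposes by orbit. Writing $\delta:=(j-h)l\bmod(q^{n}-1)$ and using the $\gamma$-exponents $b_{(i,\mu)}$ of $\mathcal{U}\cap O_{i}$, the counting argument of Theorem~\ref{thmprim} shows that the contribution of $O_{i}$ to $|\mathcal{U}P^{h}\cap\mathcal{U}P^{j}|-1$ equals the multiplicity of $\delta$ in $D_{i}$.

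Summing over $i$ yields $|\mathcal{U}P^{h}\cap\mathcal{U}P^{j}|-1=m(\delta)$, the multiplicity of $\delta$ in the multiset union $D$. Since the intersection is an $\F_{q}$-subspace, its cardinality is a power of $q$, forcing $m(\delta)=q^{d}-1$ and $\dim(\mathcal{U}P^{h}\cap\mathcal{U}P^{j})=\log_{q}(m(\delta)+1)$. Minimising over $h\neq j$ then gives minimum distance $2k-2d_{\max}$; under the implicit assumption $d_{\max}<k$ no non-trivial shift produces a full-dimensional intersection, so the orbit has length exactly $r=\mathrm{ord}(P)$. Cases~(1) and~(2) follow immediately: pairwise disjoint supports of the $D_{i}$ yield $d_{\max}=\max_{i}d_{i}$, whereas overlapping supports make multiplicities add in $D$ by construction. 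The main technical point will be the multiset bookkeeping in case~(2) --- keeping $D$ a multiset rather than a set, and confirming that every multiplicity appearing in $D$ is actually realised by an admissible shift $\delta=(j-h)l$, so that the stated maximum is tight.
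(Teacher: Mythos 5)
Your argument is correct and is exactly the intended route: the paper itself gives no proof of this theorem (it leans on the counting argument of Theorem~\ref{thmprim} and the proof of the preceding proposition), and your orbit-wise decomposition of collisions --- a nonzero element of $\Uvs P^h\cap\Uvs P^j$ forces the two generating vectors into the same orbit, multiplicities then add in the multiset $D$, and the intersection being an $\F_q$-subspace forces $m(\delta)+1$ to be a power of $q$ --- supplies precisely the details the paper omits. The only point you flag but leave open is immediate from your own setup and should be stated: every element of $D$ is a difference of $\gamma$-exponents of two elements lying in the same coset of $\langle\alpha\rangle$, hence a nonzero multiple of $l$, so it is realised by an admissible shift $\delta=(j-h)l$ with $h\neq j$, which makes the stated maximum tight (and, as you note, the cardinality claim $\mathrm{ord}(P)$ and exactness of the distance require the implicit assumption $d_{\max}<k$, the excluded case being handled by Proposition~\ref{d=k}).
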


Again note that, in the case that the minimum distance of the code is $0$, one has
  double elements in the orbit. Then Proposition \ref{d=k} still holds.

\begin{remark}
The theorems about the minimum distance can also be used for the construction of orbit codes with a prescribed minimum distance. For this construct the initial point of the orbit by iteratively joining elements $\alpha^{i} \in \F_{q^{k}}$ such that the linear span of the union fulfills the condition on the differences of the exponents. 
\end{remark}

  \section{Pl\"ucker Embedding}\label{PC}

  For the remainder of this paper let $p(x) = \sum_{i=0}^{n} p_{i}x^{i}
  \in \F_{q}[x]$ be an irreducible polynomial of degree $n$ and
  $\alpha$ a root of it. The companion matrix of $p(x)$ is denoted by
  $P$. $\F_{q}^{\times} := \F_{q}\setminus \{0\}$ is the set of all
  invertible elements of $\F_{q}$.

  Moreover, let $A\in \Mat_{m\times n}$ such that $m,n \geq k$. Denote
  by $A_{i_1,...,i_k}[j_1,...,j_k]$ the $k\times k$ submatrix of $A$
  defined by rows $i_1,...,i_k$ and columns $j_1,...,j_k$ and
  $A[j_1,...,j_k]$ denotes the submatrix of $A$ with the complete
  columns $j_1,...,j_k$.

\begin{definition}
  We define the following operation on $ \Lambda^{k}(\F_{q}[\alpha])
  \cong \Lambda^{k}(\F_q^{n})$ :
  \begin{align*}
    * : \Lambda^{k}(\F_{q}[\alpha])\times \F_q[\alpha]\setminus\{0\} 
    &\longrightarrow \Lambda^{k}(\F_{q}[\alpha])\\
    ((v_{1}\wedge ... \wedge v_{2}), \beta) &\longmapsto (v_{1}\wedge
    ... \wedge v_{2}) * \beta := (v_{1}\beta \wedge ... \wedge
    v_{k}\beta) .
  \end{align*}
  This is a group action since $((v_{1}\wedge ... \wedge v_{2}) *
  \beta) * \gamma = (v_{1}\wedge ... \wedge v_{2}) * (\beta\gamma)$.
\end{definition}

\begin{theorem}
  The following maps are (isomorphic) embeddings of the Grassmannian:
  \begin{align*}
    \varphi : \G &\longrightarrow \mathbb{P}^{\binom{n}{k}-1} \\
    \rs (U) &\longmapsto [\det(U[1,...,k]) : \det(U[1,...,k-1,k+1]) : ... : 
    \det(U[n-k+1,...,n])]\\ \vspace{0.5cm}\\
    \varphi' : \G &\longrightarrow\mathbb{P}(\Lambda^{k}(\F_{q}[\alpha]) )\\
    \rs (U) &\longmapsto (\phi(U_{1}) \wedge ... \wedge \phi(U_{k}))
    *\F_{q}^{\times}
  \end{align*}
  where $\phi : \F_{q}^{n} \rightarrow \F_{q^{n}}$ denotes the
  standard vector space isomorphism. 
\end{theorem}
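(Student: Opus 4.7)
The plan is to reduce both statements to the classical Plücker embedding of $\G$ into $\mathbb{P}(\Lambda^{k}(\F_{q}^{n}))$ and then identify the two codomains via the field isomorphism $\phi$ together with the standard wedge basis of $\Lambda^{k}(\F_{q}[\alpha])$.

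First I would verify well-definedness. Suppose $U,U'\in\Mat_{k\times n}$ both have row space $\mathcal{U}$, so $U'=TU$ for some $T\in\GL_{k}$. Column-by-column this yields $U'[j_{1},\ldots,j_{k}]=T\cdot U[j_{1},\ldots,j_{k}]$, hence $\det(U'[j_{1},\ldots,j_{k}])=\det(T)\det(U[j_{1},\ldots,j_{k}])$ for every index tuple, so $\varphi(\rs(U))$ is unchanged in $\mathbb{P}^{\binom{n}{k}-1}$. By $\F_{q}$-linearity of $\phi$ and multilinearity and antisymmetry of the wedge product, one similarly gets $\phi(U'_{1})\wedge\cdots\wedge\phi(U'_{k})=\det(T)\cdot\phi(U_{1})\wedge\cdots\wedge\phi(U_{k})$. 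Since for $\beta\in\F_{q}^{\times}$ the action reduces to $(v_{1}\wedge\cdots\wedge v_{k})*\beta=\beta^{k}(v_{1}\wedge\cdots\wedge v_{k})$, ordinary $\F_{q}$-scalar multiplication, and $\det(T)\in\F_{q}^{\times}$, the two wedges lie in the same $*\F_{q}^{\times}$-orbit, so $\varphi'(\rs(U))$ is unchanged.

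Next I identify $\varphi$ and $\varphi'$ explicitly. Expanding each row as $\phi(U_{i})=\sum_{j=1}^{n}U_{ij}\alpha^{j-1}$ and distributing through the wedge gives
\begin{equation*}
\phi(U_{1})\wedge\cdots\wedge\phi(U_{k})=\sum_{1\le j_{1}<\cdots<j_{k}\le n}\det(U[j_{1},\ldots,j_{k}])\,\alpha^{j_{1}-1}\wedge\cdots\wedge\alpha^{j_{k}-1}.
\end{equation*}
The $\binom{n}{k}$ wedges appearing on the right form an $\F_{q}$-basis of $\Lambda^{k}(\F_{q}[\alpha])$ indexed by the same set as the Plücker coordinates, so the linear isomorphism $\Psi:\Lambda^{k}(\F_{q}[\alpha])\to\F_{q}^{\binom{n}{k}}$ reading off coordinates in this basis (say, lexicographically) induces a projective isomorphism $\mathbb{P}(\Psi)$ with $\mathbb{P}(\Psi)\circ\varphi'=\varphi$. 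In particular the two maps determine each other up to this canonical identification.

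Finally, injectivity of $\varphi$ is the classical Plücker embedding theorem: the subspace $\mathcal{U}$ is recovered from the projective class of the decomposable wedge $U_{1}\wedge\cdots\wedge U_{k}$ as the kernel of the contraction $v\mapsto v\wedge U_{1}\wedge\cdots\wedge U_{k}$; transporting this recovery through $\phi$ and $\Psi$ gives injectivity of $\varphi'$. The main obstacle is not technical depth but careful bookkeeping of the action $*\F_{q}^{\times}$: one must verify that on decomposable wedges its restriction to $\F_{q}^{\times}$ coincides, via $\beta\mapsto\beta^{k}$, with ordinary projective equivalence in $\mathbb{P}(\Lambda^{k}(\F_{q}[\alpha]))$, so that modding out by $*\F_{q}^{\times}$ is precisely what makes $\varphi'$ well defined into projective space. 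Once that is in hand, the theorem reduces to the classical Plücker embedding transported through the extension-field isomorphism $\phi$.
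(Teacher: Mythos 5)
Your proposal is correct, and most of it follows the paper's own route: the well-definedness argument via $V=TU$ and the factor $\det(T)$, and the expansion of $\phi(U_{1})\wedge\cdots\wedge\phi(U_{k})$ in the basis $\{\alpha^{i_{1}}\wedge\cdots\wedge\alpha^{i_{k}}\}$ with the maximal minors $\det(U[j_{1},\ldots,j_{k}])$ as coefficients, which is exactly the isomorphism $\psi$ between $\varphi'(\G)$ and $\varphi(\G)$ that the paper constructs before concluding $\varphi'=\psi^{-1}\circ\varphi$. Where you genuinely diverge is injectivity: the paper does not invoke the classical recovery of a subspace from its wedge, but instead puts both matrices in reduced row echelon form and observes that every entry of such a matrix occurs, up to sign, among the $k\times k$ minors, so distinct echelon forms force distinct Pl\"ucker vectors. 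Your argument --- recovering $\rs(U)$ as $\{v \mid v\wedge U_{1}\wedge\cdots\wedge U_{k}=0\}$ and transporting through $\phi$ and the coordinate isomorphism --- is the standard coordinate-free proof; it buys the cleaner conceptual fact that the projective class of the decomposable wedge determines the subspace, while the paper's minor argument is more elementary and self-contained, resting only on uniqueness of the reduced row echelon form. One caveat: your claim that the restriction of the $*$-action to $\F_{q}^{\times}$ ``coincides, via $\beta\mapsto\beta^{k}$, with ordinary projective equivalence'' is too strong, since the $k$-th powers form a proper subgroup of $\F_{q}^{\times}$ whenever $\gcd(k,q-1)>1$, so the two wedges $\det(T)\,w$ and $w$ need not lie in the same $*\F_{q}^{\times}$-orbit in the literal sense; what saves the argument is that only the easy direction is needed --- a $\det(T)$-multiple is in any case a scalar multiple, hence the same point of $\mathbb{P}(\Lambda^{k}(\F_{q}[\alpha]))$ --- and the paper's own notation ($*\F_{q}^{\times}$ in the theorem versus $\F_{q}^{\times}\cdot$ in the subsequent proof) blurs exactly the same distinction.
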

\begin{proof}
First we show that $\varphi$ is an embedding. For this assume 
that $U,V$ are two full-rank $k\times n$ matrices such that $\rs (U)=\rs (V)$. It follows that there is 
an $S\in \GL_{k}$ with $V=SU$. The two vectors
 $$
[\det(U[1,...,k]), \det(U[1,...,k-1,k+1]),\ldots , 
    \det(U[n-k+1,...,n])]
$$
and 
 $$
[\det(V[1,...,k]), \det(V[1,...,k-1,k+1]),\ldots , 
    \det(V[n-k+1,...,n])]
$$
differ hence only by the non-zero factor $\det S$. As elements 
of the projective space $\mathbb{P}^{\binom{n}{k}-1}$ they are
thus the same and the map is well defined.

Assume now that  $\rs (U)\neq \rs (V)$. Without loss of generality
we can assume that both $U$ and $V$ are in reduced row echelon form, where
 the  forms are necessarily different. Observe that all non-zero entries
of $U$ can also be written, up to sign, as $\det (U[i_1,...,i_k])$. It follows that  $\varphi(U)$ is 
different from $\varphi(V)$.

Next we show that the map $\psi : \varphi'(\G) \rightarrow \varphi(\G)$, defined as follows, is an isomorphism.
  \begin{align*}
    (\phi(U_{1}) \wedge ... \wedge \phi(U_{k})) * \F_{q}^{\times} 
    &= (\sum_{i=0}^{n-1} \lambda_{1i} \alpha^{i} \wedge ... 
    \wedge \sum_{i=0}^{n-1} \lambda_{ki} \alpha^{i}) * \F_{q}^{\times} \\
    &= \sum_{0\leq i_{1},...,i_{k} < n} (\lambda_{1i_{1}} \alpha^{i_{1}} 
     \wedge ... \wedge \lambda_{ki_{k}} \alpha^{i_{k}}) * \F_{q}^{\times} \\
    &=  \sum_{0\leq i_{1},...,i_{k} < n} {\lambda_{1i_{1}} ...  \lambda_{ki_{k}}}
     (\alpha^{i_{1}} \wedge ... \wedge \alpha^{i_{k}})* \F_{q}^{\times} \\
    &= \sum_{0\leq i_{1}<...<i_{k} < n} \mu_{i_{1}, ... , i_{k}} (\alpha^{i_{1}} 
      \wedge ... \wedge \alpha^{i_{k}})* \F_{q}^{\times} \\
    &\longmapsto [\mu_{0, ... , k-1} : ... : \mu_{n-k, ... , n-1} ]
  \end{align*}
  where $\lambda_{jl} \in \F_{q}$ for all $j\in \{1,...,k\}, l\in
  \{0,...,n-1\}$ and $\mu_{i_{1},...,i_{k}} := \sum_{\sigma \in S_{k}}
  (-1)^{\sigma} \lambda_{1\sigma(i_{1})} ...  \lambda_{k\sigma(i_{k})}
  \in \F_{q}$.

  Since $\psi$ is an isomorphism and $\varphi' = \psi^{-1} \circ
  \varphi$, it follows that $\varphi'$ is an embedding as well.  \qed
\end{proof}

\begin{remark}
The map $\varphi$ is called the \emph{Pl\"ucker embedding} of the Grassmannian $\G$.
The projective coordinates 
$$ 
[\det(U[1,...,k]) : ... : \det(U[n-k+1,...,n])] =  
\F_{q}^{\times} (\det(U[1,...,k]) , ... , \det(U[n-k+1,...,n])).
$$ 
are often referred to as the 
\emph{Pl\"ucker coordinates} of $\rs(U)$.
\end{remark}

\begin{theorem}
The following diagram commutes:
\begin{align*}
\begin{diagram}
  \Uvs	&\rTo^{\cdot P}	&\Uvs P\\
  \dTo^{\varphi'} &	&\dTo_{\varphi'}\\
  \sum \mu_{i_{1},\dots,i_{k}} (\alpha^{i_{1}}\wedge ... \wedge
  \alpha^{i_{k}})* \F_{q}^{\times} &\rTo^{\phantom{\text{aaa}} *\alpha
    \phantom{\text{aaa}}} & \sum \mu_{i_{1},\dots,i_{k}}
  (\alpha^{i_{1}+1}\wedge ... \wedge \alpha^{i_{k}+1})*
  \F_{q}^{\times}
\end{diagram}&&
\end{align*}
Hence, an irreducible cyclic orbit code $\Cvs = \{\Uvs
P^i \mid i=0,...,\mathrm{ord}(P)-1\}$ has a corresponding ``Pl\"ucker
orbit'':
\[
\varphi'(\Cvs) = \{\varphi'(\Uvs) * \alpha^i \mid i=0,...,\mathrm{ord}(\alpha)-1\}
 = \varphi'(\Uvs) * \langle \alpha \rangle
\]
\end{theorem}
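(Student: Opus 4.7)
The commutativity of the diagram reduces to a single identity: $\varphi'(\Uvs P) = \varphi'(\Uvs) * \alpha$. My strategy is to push the $P$-action through the isomorphism $\phi$ using Theorem \ref{compmult}, which is exactly the bridge between right-multiplication by $P$ on $\F_q^n$ and multiplication by $\alpha$ in $\F_{q^n}$.

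First I would fix a matrix representative $U \in \Mat_{k\times n}$ of $\Uvs$ with rows $U_1,\ldots,U_k$, so that $\Uvs P = \rs(UP)$ is represented by the rows $U_1P,\ldots,U_kP$. Applying $\varphi'$ to $\Uvs P$ yields $(\phi(U_1P) \wedge \cdots \wedge \phi(U_kP)) * \F_q^\times$. Theorem \ref{compmult} gives $\phi(U_iP) = \phi(U_i)\alpha$ for each $i$, so this rewrites as $((\phi(U_1)\alpha) \wedge \cdots \wedge (\phi(U_k)\alpha)) * \F_q^\times$. By the very definition of the $*$ operation, the inner wedge equals $(\phi(U_1) \wedge \cdots \wedge \phi(U_k)) * \alpha$, and the full expression becomes $((\phi(U_1) \wedge \cdots \wedge \phi(U_k)) * \alpha) * \F_q^\times = \varphi'(\Uvs) * \alpha$, as required.

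The one technical point that needs to be verified is that the action $*\alpha$ descends to the quotient by $\F_q^\times$, i.e. that it sends $\F_q^\times$-orbits to $\F_q^\times$-orbits so that the notation $\varphi'(\Uvs) * \alpha$ even makes sense. This is immediate from the commutativity of $\F_{q^n}$: for any $c \in \F_q^\times$ and any $w \in \Lambda^k(\F_q[\alpha])$ we have $(w * c) * \alpha = w * (c\alpha) = w * (\alpha c) = (w * \alpha) * c$, so the $\F_q^\times$-action and the $\alpha$-action commute, and $*\alpha$ is well defined on the projective image. The choice of representative $U$ for $\Uvs$ is handled analogously, since a change $U \mapsto SU$ with $S \in \GL_k$ scales the wedge by $\det S \in \F_q^\times$.

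For the final statement about the Pl\"ucker orbit, I would simply iterate the commutative square: an easy induction on $i$ using $\varphi'(\Uvs P^{i+1}) = \varphi'(\Uvs P^i) * \alpha$ yields $\varphi'(\Uvs P^i) = \varphi'(\Uvs) * \alpha^i$ for every $i$, and hence $\varphi'(\Cvs) = \varphi'(\Uvs) * \langle \alpha \rangle$. There is no real obstacle here: the core matrix-to-field translation is already packaged in Theorem \ref{compmult}, and the rest is careful bookkeeping of the $\F_q^\times$ quotient together with the multilinearity of the wedge product.
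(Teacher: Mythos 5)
Your proposal is correct and follows essentially the same route as the paper: apply $\varphi'$ to a matrix representative of $\Uvs P$, use Theorem \ref{compmult} to rewrite $\phi(U_iP)$ as $\phi(U_i)\alpha$, and recognize the result as $\varphi'(\Uvs)*\alpha$ by the definition of $*$. Your additional remarks on well-definedness modulo $\F_q^{\times}$ and the induction giving the full Pl\"ucker orbit are just details the paper leaves implicit.
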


\begin{proof}
\begin{align*}
\varphi'(\Uvs P) &= \F_{q}^{\times} \cdot (\phi(U_{1}P) 
\wedge  ...\wedge  \phi(U_{k}P)) = \F_{q}^{\times} \cdot 
(\phi(U_{1})\alpha \wedge  ...\wedge  \phi(U_{k})\alpha) \\
&= \F_{q}^{\times} \cdot (\phi(U_{1}) \wedge  ...\wedge  \phi(U_{k})) *\alpha 
\end{align*}
\qed
\end{proof}

\begin{example}
  Over $\F_2$ let $p(x)=x^4+x+1$ and $\Uvs \in \mathcal{G}_2(2,4)$
  such that $\phi(\Uvs)=\{0, 1, \alpha+\alpha^2, 1+\alpha+\alpha^2\}$,
  i.e.
  \[\Uvs = \rs\left[\begin{array}{cccc}1 & 0&0&0\\ 0&1&1&0
    \end{array}\right]  .\]
  Then $ \varphi'(\Uvs) = (1 \wedge \alpha+\alpha^2) = (1 \wedge
  \alpha) + (1\wedge \alpha^{2})$ and $\varphi(\Uvs) = [\mu_{0,1}:
  \mu_{0,2}: \mu_{0,3} : \mu_{1,2}: \mu_{1,3} :\mu_{2,3}] =
  [1:1:0:0:0:0] $. The elements of the Pl\"ucker orbit
  $\varphi'(\Uvs)*\langle \alpha \rangle$ are
  \begin{align*}
    & (1 \wedge \alpha+\alpha^2) = (1 \wedge \alpha) + (1\wedge \alpha^2) ,\\
    & (\alpha \wedge \alpha^2+\alpha^3) = (\alpha \wedge \alpha^2) + (\alpha \wedge \alpha^3) ,\\
    & (\alpha^2 \wedge \alpha^3+\alpha^4) = 
    (\alpha^2 \wedge 1+\alpha+ \alpha^3) = (\alpha^2 \wedge 1)
     +(\alpha^2 \wedge \alpha) + (\alpha^2 \wedge \alpha^3) ,\\
    & (\alpha^3 \wedge \alpha+\alpha^2+\alpha^4) =  (\alpha^3 \wedge 1+\alpha^2) 
    = (\alpha^3 \wedge 1) + (\alpha^3 \wedge \alpha^2) ,\\
    & (\alpha^4 \wedge \alpha+\alpha^3) = (1+\alpha \wedge
    \alpha+\alpha^3)= (1 \wedge \alpha)+ (1 \wedge \alpha^3) + (\alpha
    \wedge \alpha^3),
  \end{align*}
  and $ (\alpha+\alpha^2 \wedge \alpha^2+\alpha^4) = (\alpha+\alpha^2
  \wedge 1+\alpha+\alpha^2) = (\alpha+\alpha^2 \wedge 1)= (1 \wedge
  \alpha+\alpha^2)$ over $\F_2$. The corresponding Pl\"ucker
  coordinates are
  \begin{align*}
    & [ 1:1:0:0:0:0] ,\\
    & [ 0:0:0:1:1:0]  ,\\
    & [ 0:1:0:1:0:1] ,\\
    & [ 0:0:1:0:0:1] ,\\
    & [ 1:0:1:0:1:0] .
  \end{align*}
  The respective subspace code is the spread code defined by $x^4+x+1$
  according to Section \ref{prim}.
\end{example}

In the following we describe the balls of radius $t$
(with respect to the subspace distance) around some $\Uvs \in \G$ with
the help of the Pl\"ucker coordinates. An algebraic description of the balls of
radius $t$ is potentially important if one is interested in an algebraic
decoding algorithm for constant dimension codes. For example, a list decoding algorithm
would compute all code words inside some
ball around a received message word.

The main result shows that the balls of radius $t$ have the structure
of Schubert varieties~\cite[p. 316]{ho52}. In order to establish this
result we introduce the following partial order:

\begin{definition}
  Consider the set $\binom{[n]}{k} := \{(i_1,...,i_k) \mid i_l \in
  \mathbb{Z}_n \;\forall l\}$ and define the partial order
  \[
  \mathbf{i}:=(i_{1},...,i_k) > (j_{1},...,j_{k})=:\mathbf{j} \iff \exists N\in
  \mathbb{N}_{\geq 0} : i_{l}=j_{l} \;\forall l<N \textnormal{ and }
  i_{N} > j_{N} .\]
\end{definition}

It is easy to compute the balls around a vector space in the following
special case.

\begin{proposition}
  Denote the balls of radius $2t$ centered at $\Uvs$ in $\G$ by
  $B_{2t}(\Uvs)$ and define $\Uvs_{0}:=\rs
  [\begin{array}{cc}I_{k\times k} &0_{k\times n-k} \end{array}]$. Then
  \begin{align*}
    B_{2t}(\Uvs_{0}) = \{&\Vvs \in \G \mid \varphi'(\Vvs) =
    \det(\Vvs[i_1,...,i_{k}]) = 0 \;\forall (i_{1},...,i_{k}) \not
    \leq (t+1,...,k,n-t+1,...,n) \}
  \end{align*}
\end{proposition}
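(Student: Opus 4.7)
The plan is to reduce the geometric distance condition to a rank condition on a submatrix of a representative $V$ of $\Vvs$, and then to translate that rank condition into the vanishing of specific Pl\"ucker coordinates.

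First, because $\Uvs_{0}$ is the kernel of the coordinate projection $\F_q^n\to\F_q^{n-k}$ onto the last $n-k$ coordinates, for any $V\in\Mat_{k\times n}$ with $\Vvs=\rs(V)$ one has $\dim(\Uvs_{0}\cap\Vvs)=k-\mathrm{rank}(V[k+1,\ldots,n])$. Combined with $d_S(\Uvs_{0},\Vvs)=2k-2\dim(\Uvs_{0}\cap\Vvs)$ this yields
\[
\Vvs\in B_{2t}(\Uvs_{0})\iff\mathrm{rank}(V[k+1,\ldots,n])\leq t.
\]
On the Pl\"ucker side, reading ``$\leq$'' componentwise (the order under which Schubert varieties acquire this explicit defining-ideal description), an increasing multi-index $(i_1<\cdots<i_k)$ satisfies $(i_1,\ldots,i_k)\leq(t+1,\ldots,k,n-t+1,\ldots,n)$ precisely when $i_{k-t}\leq k$, i.e.\ when at most $t$ of the entries lie in $\{k+1,\ldots,n\}$. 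So the proposition is equivalent to: $\mathrm{rank}(V[k+1,\ldots,n])\leq t$ if and only if every Pl\"ucker coordinate $\det V[i_1,\ldots,i_k]$ whose index set contains more than $t$ elements of $\{k+1,\ldots,n\}$ vanishes.

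For the forward direction I would pick a basis of $\Vvs$ whose first $k-t$ vectors lie in $\Uvs_{0}\cap\Vvs$, which is possible since $\dim(\Uvs_{0}\cap\Vvs)\geq k-t$. The resulting representative $V$ has an upper-right zero block of size $(k-t)\times(n-k)$. For any $(i_1,\ldots,i_k)$ with $s>t$ entries in $\{k+1,\ldots,n\}$, the top $k-t$ rows of $V[i_1,\ldots,i_k]$ are supported on only the remaining $k-s$ columns; since $k-s<k-t$, these rows are linearly dependent, forcing $\det V[i_1,\ldots,i_k]=0$.

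For the converse I would argue the contrapositive via the Steinitz exchange lemma. If $\mathrm{rank}(V[k+1,\ldots,n])\geq t+1$, pick $c_1<\cdots<c_{t+1}$ in $\{k+1,\ldots,n\}$ with $V[c_1,\ldots,c_{t+1}]$ of rank $t+1$; these columns are independent in $\F_q^k$. Since the full set of columns of $V$ spans $\F_q^k$, the exchange lemma extends this independent set to a basis by adjoining columns $V[b_1],\ldots,V[b_{k-t-1}]$ from the remaining indices in $\{1,\ldots,n\}\setminus\{c_1,\ldots,c_{t+1}\}$. After sorting, $\{c_1,\ldots,c_{t+1},b_1,\ldots,b_{k-t-1}\}$ gives a nonzero Pl\"ucker coordinate whose multi-index still has at least $t+1>t$ entries in $\{k+1,\ldots,n\}$, contradicting the hypothesis.

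The main subtlety lies in interpreting the partial order correctly: only the componentwise reading gives the clean equivalence ``$\mathbf{i}\leq\mathbf{j}$'' $\Leftrightarrow$ ``$i_{k-t}\leq k$,'' under which the proposition is the standard Schubert-variety description of the distance ball. Of the two implications, only the converse requires a genuinely non-block-matrix argument, and Steinitz exchange supplies exactly the basis-extension property of the column matroid of $V$ needed there.
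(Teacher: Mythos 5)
Your proof is correct, and it in fact establishes more than the paper's own argument does. The paper makes the same initial reduction --- $d_S(\Uvs_0,\Vvs)\le 2t$ iff $\dim(\Uvs_0\cap\Vvs)\ge k-t$ --- but then only sketches the inclusion of the ball into the vanishing locus, asserting loosely that ``$k-t$ many of the unit vectors $e_1,\dots,e_k$ have to be elements of $\Vvs$'' (only a $(k-t)$-dimensional subspace of $\Uvs_0$ inside $\Vvs$ is required, not unit vectors), and it never addresses the reverse inclusion at all. Your version supplies both directions: the rank reformulation $\mathrm{rank}(V[k+1,\dots,n])\le t$, the zero-block representative for the forward inclusion (legitimate, since vanishing of a fixed minor is independent of the chosen representative, all minors scaling by the same nonzero factor), and the Steinitz-exchange argument for the converse, which is exactly the half of the set equality missing from the paper. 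Your remark on the order is also well taken and resolves a genuine defect: the order as literally defined in the paper is lexicographic, under which the proposition fails --- for instance in $\mathcal{G}_q(3,6)$ with $t=1$, the space spanned by $e_1,e_4,e_5$ meets $\Uvs_0$ only in a line, yet its sole nonzero Pl\"ucker coordinate has index $(1,4,5)\le(2,3,6)$ lexicographically, so it satisfies every imposed vanishing condition --- whereas your componentwise reading, equivalent to ``at most $t$ indices exceed $k$,'' is the one under which the statement, and its identification with a Schubert variety, is true. In short, your route buys a complete two-sided proof and the correct order convention, at the cost of a somewhat longer matrix-level argument than the paper's wedge-product sketch.
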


\begin{proof}
  For $\Vvs$ to be inside the ball it has to hold that
  \begin{align*}
    d_{S}(\Uvs_{0}, \Vvs) &\leq 2t \\
    \iff 2k - 2\dim(\Uvs_{0} \cap \Vvs) & \leq 2 t \\
    \iff \dim(\Uvs_{0} \cap \Vvs) & \geq k-t ,
  \end{align*}
  i.e. $k-t$ many of the unit vectors $e_{1},..., e_{k}$ have to be
  elements of $\Vvs$. Since $\phi(e_{j})= \alpha^{j-1}$, it follows
  that $\varphi'(\Vvs)$ has to fulfill
  \[\mu_{i_1,...,i_{k}} = 0 \textnormal{ if } (i_{1},...,i_{k}) \not
  \leq (t+1,...,k,n-t+1,...,n) .
  \]
  \qed
\end{proof}

The proposition shows that $B_{2t}(\Uvs_0)$ is described in the 
Pl\"ucker space $ \mathbb{P}^{\binom{n}{k}-1}$ as a point in the
Grassmannian together with linear constraints on the Pl\"ucker coordinates.

\begin{example}
  In $\mathcal{G}_{2}(2,4)$ we have
  \[\Uvs_{0} = \rs\left[\begin{array}{cccc}1 & 0&0&0\\ 0&1&0&0
    \end{array}\right]  \]
  and the elements of distance $2$ (i.e. $t=1$) are
  \begin{align*}
    B_{2}(\Uvs_{0}) = \{&\Vvs \in \mathcal{G}_{2}(2,4) 
    \mid \det(\Vvs[i_1, i_{2}]) = 0  \;\forall (i_{1}, i_{2}) \not \leq  (2,4)  \} \\
    = \{&\Vvs \in \mathcal{G}_{2}(2,4) \mid \det(\Vvs[3,4]) = 0 \} .
  \end{align*}
\end{example}

Next we derive the equations for a ball $B_{2t}(\Uvs)$ around an arbitrary 
subspace $\Uvs\in\G$. For this  assume that $\Uvs = \Uvs_0 G$ for some
$G\in \GL_n$. A direct computation shows that
\[
B_{2t}(\Uvs) = B_{2t}(\Uvs_0 G) =B_{2t}(\Uvs_0) G .
\]

The transformation by $G$ transforms the linear equations
$\det(\Vvs[i_1,...,i_{k}]) = 0 \;\forall (i_{1},...,i_{k}) \not \leq
(t+1,...,k,n-t+1,...,n)$ into a new set of linear equations in the
Pl\"ucker coordinates. Instead of deriving these equations in an
explicit manner we will show instead that the ball $B_{2t}(\Uvs)$
describes a Schubert variety. Then we will show that
the equations defining the ball consist of the defining equations
of the Grassmann variety together with a set of linear equations
describing the Schubert variety.

\begin{definition}
A flag ${\cal F}$ is a sequence of nested subspaces
\begin{equation*}
\{ 0\}\subset V_{1}\subset V_{2}\subset\ldots \subset
V_{n}= \F_{q}^n
\end{equation*}
where we assume that $\dim V_{i}=i$ for $ i=1,\dots ,n$.
\end{definition}

\begin{definition}
Consider the multi-index $\mathbf{i}= (i_{1},\ldots ,i_{k})$
such that $1\leq i_{1} <\ldots <i_{k} \leq n $. Then
 $$
S(\mathbf{i};{\cal F})  :=
\{ \Vvs\in \G \mid \dim (\Vvs\cap V_{i_{s}}) \geq s\} 
$$
is called a \emph{Schubert variety}.
\end{definition}

One observes that $B_{2t}(\Uvs)=\{\Vvs \in \G \mid
\dim(\Uvs \cap \Vvs) \geq k-t\}$
 is nothing else than
a special Schubert variety. Indeed, we can simply choose a flag ${\cal F}$
having the property that $V_k=\Uvs$ in conjunction with the multi-index
$\mathbf{i}=(t+1,...,k,n-t+1,...,n)$. 

Next we describe the defining equations inside the Pl\"ucker space $
\mathbb{P}^{\binom{n}{k}-1}$. For this introduce a basis $\{
e_{1},\ldots ,e_{n}\} $ of $\F_{q}^n$ which is compatible with the flag
${\cal F}$, i.e. $\mathrm{span}\{
e_{1},\ldots ,e_{i}\}=V_i$ for $ i=1,\ldots,n.$

The basis $\{ e_{1},\ldots ,e_{n}\} $ induces the basis
$$
\{ e_{i_{1}}\wedge \ldots \wedge e_{i_{k}} \mid 1\leq i_{1}
<\dots <i_{k} \leq n\}
$$
of $\Lambda^{k}(\F_{q}^n)$. If $x\in \Lambda^{k}(\F_{q}^n)$, denote by $x_\mathbf{i}$ its coordinate with 
regard to the basis vector $e_{i_{1}}\wedge \ldots \wedge e_{i_{k}}$.
The defining equations of the Schubert variety $S(\mathbf{i};{\cal F})$
are then given by
\begin{equation*}
S(\mathbf{i};{\cal F})=\{ x\in \G\mid x_\mathbf{j}=0,\ \ \forall \mathbf{j}\not
  \leq\mathbf{i}\} .
\end{equation*} 

An elementary proof of the fact that these linear equations 
together with the defining equations of the Grassmannian $\G$ indeed describe 
the  Schubert variety $S(\mathbf{i};{\cal F})$ can be found in~\cite[Chapter XIV]{ho52}.

For coding theory it is important to note that we have explicit equations
describing Schubert varieties in general and balls of radius $t$ in particular.
If a constant dimension network code is given by explicit equations,
one would immediately have a description of all code words which are closer than a given distance to some received subspace.




\section{Conclusion}\label{conclusion}

We listed all possible irreducible cyclic orbit codes and showed that
it suffices to investigate the groups generated by companion matrices
of irreducible polynomials. Moreover, polynomials of the same degree
and same order generate codes with the same cardinality and minimum
distance. These two properties of the code depend strongly on the
choice of the starting point in the Grassmannian. We showed how one can
deduce the size and distance of an orbit code for a given subgroup of
$\GL_n$ from the starting point $\mathcal{U} \in \G$. For primitive
groups this is quite straight-forward while the non-primitive case is
more difficult.

Subsequently one can use this theory of irreducible cyclic orbit codes
to characterize all cyclic orbit codes. 

Finally we described 
the irreducible cyclic orbit codes within the Pl\"ucker space and showed that 
the orbit structure is preserved. Moreover, we showed how balls around an element of the Grassmann variety can be described using Pl\"ucker coordinates.


\bibliography{/home/a/rosen/Bib/huge.bib}
\bibliographystyle{plain}

\end{document}